\newcommand{\R}{\mathbf{R}}
\newcommand{\x}{\mathbf{x}}
\newcommand{\tx}{\tilde{\mathbf{x}}}
\newcommand{\xx}{\mathbb{x}}
\newcommand{\zz}{\mathbb{z}}
\newcommand{\dmat}{\mathbf{d}}
\newcommand{\dtilde}{\tilde{\mathbf{d}}}
\newcommand{\dubd}{ \mathbf{d}{ \hspace{-.6 mm} \textnormal{l}}}
\newcommand{\dtildep}{\tilde{ \dmat }_{\hspace{-.6mm}{}{/}}{}_{\scriptscriptstyle \hspace{-.6mm}\matp}}
\newcommand{\tht}{{\scriptscriptstyle{\textnormal{HT}}}}
\newcommand{\tas}{{\scriptscriptstyle{\textnormal{AS}}}}
\newcommand{\tgn}{{\scriptscriptstyle{\textnormal{N}}}}
\newcommand{\thj}{{\scriptscriptstyle{\textnormal{HJ}}}}
\newcommand{\tl}{{\scriptscriptstyle{\textnormal{T}}}}
\newcommand{\tm}{{\scriptscriptstyle{\textnormal{M}}}}
\newcommand{\tcom}{{\scriptscriptstyle{\textnormal{CM}}}}
\newcommand{\tols}{{\scriptscriptstyle{\textnormal{OLS}}}}
\newcommand{\tiv}{{\scriptscriptstyle{\textnormal{IV}}}}
\newcommand{\twls}{{\scriptscriptstyle{\textnormal{WLS}}}}
\newcommand{\tb}{{\scriptscriptstyle{\textnormal{B}}}}
\newcommand{\tinv}{{\scriptscriptstyle{\textnormal{INVAR}}}}
\newcommand{\matp}{\mathbf{p}}
\newcommand{\m}{\mathbf{m}}
\newcommand{\W}{\mathbf{W}}
\newcommand{\w}{\mathbf{w}}
\newcommand{\I}{\mathbf{i}}
\newcommand{\E}{\text{\textnormal{E}}}
\newcommand{\V}{\text{\textnormal{V}}}
\newcommand{\diag}[1]{\textnormal{\footnotesize diag}\left(#1\right)}
\newcommand{\diagSmallBracket}[1]{\textnormal{\footnotesize diag}(#1)}
\newcommand{\nNegTwo}{n^{-2}}
\newcommand{\bfI}{\text{\textnormal{I}}}
\newcommand{\II}{{\scriptscriptstyle \text{\textnormal{I\hspace{-0.0mm}I}}}}
\newcommand{\sI}{{\scriptscriptstyle \text{\textnormal{I}}}}
\newcommand{\bpi}{\boldsymbol{\pi}}
\newcommand{\bpiInv}{\bpi^{-1}}
\newcolumntype{L}[1]{>{\arraybackslash}p{#1}}
\newcolumntype{C}[1]{>{\centering \arraybackslash}p{#1}}
\newcommand{\onesmat}{\mathds{1}}
\providecommand{\@fourthoffour}[4]{#4}
\newcommand{\exampleType}[1]{\vspace{3mm}  \noindent \underline{Example} \hspace{1mm}(\textit{#1}) : \hspace{1mm}}
\providecommand{\@fourthoffour}[4]{#4}
\newcommand\fixstatement[2][\proofname\space of]{%
	\ifcsname thmt@original@#2\endcsname
	\AtEndEnvironment{#2}{%
		\xdef\pat@label{\expandafter\expandafter\expandafter
			\@fourthoffour\csname thmt@original@#2\endcsname\space\@currentlabel}%
		\xdef\pat@proofof{\@nameuse{pat@proofof@#2}}%
	}%
	\else
	\AtEndEnvironment{#2}{%
		\xdef\pat@label{\expandafter\expandafter\expandafter
			\@fourthoffour\csname #1\endcsname\space\@currentlabel}%
		\xdef\pat@proofof{\@nameuse{pat@proofof@#2}}%
	}%
	\fi
	\@namedef{pat@proofof@#2}{#1}%
}
\newcounter{proofcount}
\def\printproofs{%
	\count@=\z@
	\loop
	\the\toks\numexpr 1000 +\count@\relax
	\ifnum\count@<\value{proofcount}%
	\advance\count@\@ne
	\repeat}
\declaretheorem[style=plain,name=Theorem, numberwithin=section]{theorem}
\newtheorem{lemma}[theorem]{Lemma}
\newtheorem{condition}{Condition}
\newtheorem{algorithm}[theorem]{Algorithm}
\newtheorem{remark}{Remark}
\newtheorem{definition}[theorem]{Definition}
\newcommand{\ones}[1]{ 1_{\scriptscriptstyle {#1}}}
\begin{document}

	\title{Unifying Design-based Inference: 
		\\ On Bounding and Estimating the Variance of 
		\\ any Linear Estimator in any Experimental Design
	\\	\small \vspace{5mm} WORKING PAPER 1 OF 4
	}

	\author{Joel A. Middleton{\footnote{
				Charles and Louise Travers Department of Political Science, \textit{University of California, Berkeley.} \\ \emph{{email:} joel.middleton@gmail.com}}} }
	
\date{April 4, 2021}

\maketitle

\pagebreak

\section{Introduction }

This paper provides a design-based framework for variance (bound) estimation in experimental analysis.  Results are applicable to virtually any combination of experimental design, linear estimator (e.g., difference-in-means, OLS, WLS) and variance bound, allowing for unified treatment and a basis for systematic study and compairison of designs using matrix spectral analysis. A proposed variance estimator reproduces Eicker-Huber-White (aka. ``robust", ``heteroskedastic consistent", ``sandwich", ``White", ``Huber-White", ``HC", etc.) standard errors and ``cluster-robust" standard errors as special cases. While past work has shown algebraic equivalences between design-based and the so-called ``robust" standard errors under some designs, this paper motivates them for a wide array of design-estimator-bound triplets. In so doing, it provides a clearer and more general motivation for ``robust" variance estimators.


\subsection{The Neyman Causal Model}

Consider a randomized experiment with $k$ treatment arms. The Neyman causal model (NCM) assumes that the units in the experimental study represent a finite population of size $n$. For a given outcome measure, call it $y$, each unit, $i$, responds with one of $k$ possible values in $\{y_{1i}, y_{2i}, ..., y_{ki} \}$, depending on their treatment assignment. The possible responses are referred to as the \textit{potential outcomes}. In the NCM these values are considered (nonrandom) constants, which stands in contrast to other, more common, formulations where potential outcomes are assumed to be sampled from some (possibly nonparametric) distribution. 

The only random element in the NCM is the treatment assignment indicators  $\{R_{1i}, R_{2i},...,R_{ki} \}$, and they determine which potential outcome will be observed by the researcher. Since a unit can only be assigned to one arm of the experiment, only one of the indicators will realize a value of one, and the rest will be zero, such that $R_{1i}+R_{2i}+...+R_{ki}=1$ for all $i$.

A standard representation of the \textit{observed} outcome for the $i^{th}$ unit under the NCM would be,
\begin{align*}
Y_i^{obs}= y_{1i}R_{1i}+y_{2i}R_{2i}+...+y_{ki}R_{ki},
\end{align*}
which is itself random, due to the assignment indicators. For each unit, the observed data can then be represented as $\{Y^{obs}_i, R_{1i}, R_{2i},..., R_{ki}, x_i\}_{\forall i}$, where $x_i$ is an additional vector of $l$ covariates. Like the potential outcomes, $x_i$ is nonrandom, but unlike the potential outcomes the same value is observed irrespective of the assignment.  

Ideally, we would like to know, for a given individual, $i$, the difference between responses under various arms, called a \textit{treatment effect}.  It is clear from the definition of $Y_i^{obs}$, however, that individual treatment effects are not observable since only one of the potential outcomes can be observed for an individual, a problem known as \textit{fundamental problem of causal inference} \citep{holland}. As a result, researchers often try to estimate \textit{averages} of across the units in study. 

\exampleType{Treatment/Control Experiment} In an experiment with a control group (arm 0) and a treatment group (arm 1) the individual-level treatment effect, $y_{1i}-y_{0i}$, but this is not identified, so a researcher might try to estimate the average treatment effect $n^{-1} \sum_i  \left(y_{1i}-y_{0i}\right)$.{ \hfill $\triangle$}

\exampleType{$2 \times 2$ Factorial Experiment} Consider a 2$\times$2 factorial design with treatments A and B. Units in arm 1 are controls (no treatments), units in arm 2 are given treatment A only, units in arm 3 are given B only, and units in arm 4 are given both A and B.  Similar to the treatment/control example, one could contrast the mean of an arm with a single treatment against the control mean, e.g., the average effect of A compared to no treatments, $n^{-1} \sum_i  \left(y_{2i}-y_{1i}\right)$.  Another quantity of interest might be an \textit{average marginal causal effect }(AMCE), e.g., the effect of A marginalizing over the levels of B, $n^{-1} \sum_i  \frac{1}{2} \left(y_{2i}-y_{1i}+y_{4i}-y_{3i}\right)$.  Another example might be an omnibus test based on the contrast $n^{-1} \sum_i \left(y_{2i}/3+y_{3i}/3+y_{4i}/3-y_{1i}\right)$.{ \hfill $\triangle$}

\vspace{2mm} Target quantities such as \textit{local average treatment effects} or \textit{conditional average treatment effects} might also be considered in this framework, but the primary focus of this paper is \textit{variance} estimation for linear estimators for virtually any design. 

Suffice to say that developing variance estimators before considering point estimation is appealing, if somewhat counter-intuitive, for two reasons. On the one hand, asymptotic analysis for point estimators can be made easier by having first established general variance expressions (for all linear estimators and virtually any design). On the other hand, 
a general framework for variance (bound) estimation can be developed even while a particular estimation target has yet to be defined, and even if an ``estimator" does not estimate anything of interest, it's variance can still be studied.

\subsection{Notation}

To simplify notation, let $y_1$, $y_2$,...,$y_k$ represent length $n$ vectors of potential outcomes associated with each of the arms, with the $i^{th}$ element of each corresponding to the $i^{th}$ unit.  Next, stack these vectors to create
\begin{align*}
y := \left(y_1' \hspace{2mm} y_2' \hspace{2mm} \hdots \hspace{2mm} y_k' \right)',
\end{align*}
which is a column vector and has length $kn$ containing all $k$ potential outcomes for all $n$ units.

Next, if we let $1_{\scriptscriptstyle n}$ be a $n$-length vector of ones, then a $kn \times k$ \textit{intercept matrix} can be defined as,
\begin{align*}
\onesmat := & \left[
\begin{matrix}
1_{\scriptscriptstyle n} &  & & 
\\  & 1_{\scriptscriptstyle n} &   & 
\\  &  & \ddots 
\\  &  & & 1_{\scriptscriptstyle n} 
\end{matrix} \right], 
\end{align*}
which, for example, allows us to express a \textit{k}-length vector of the means of each arm as $\frac{1}{n} \onesmat' y$, or, equivalently, $\left(\onesmat' \onesmat\right)^{-1} \onesmat' y$. Next, define $c$ as the \textit{contrast vector}, of length $k$, such that $c'\left(\onesmat' \onesmat\right)^{-1} \onesmat' y$ gives contrasts between potential outcome means for the various arms.

\exampleType{Treatment/Control Experiment, continued}
With two arms, control (arm 1) and treatment (arm 2), define $c=\left(-1 \hspace{2mm} 1\right)'$. Then the \textit{average treatment effect} is simply $c' \left(\onesmat' \onesmat\right)^{-1} \onesmat' y=  n^{-1} \sum_i  \left(y_{2i}-y_{1i}\right)$.{ \hfill $\triangle$}

\exampleType{$2 \times 2$ Factorial Experiment, continued} In a four-arm experiment, if $c=(-1\hspace{2mm} 1 \hspace{2mm} 0 \hspace{2mm} 0)'$ then $ c' \left(\onesmat' \onesmat\right)^{-1} \onesmat' y= n^{-1} \sum_i  \left(y_{2i}-y_{1i}\right) $ is the avearge difference between the first two arms. Alternatively, if the researcher chooses $c=(-\frac{1}{2}\hspace{2mm} \frac{1}{2} \hspace{2mm} -\frac{1}{2} \hspace{2mm} \frac{1}{2} )'$ then $c' \left(\onesmat' \onesmat\right)^{-1} \onesmat' y = n^{-1} \sum_i \frac{1}{2} \left(y_{2i}-y_{1i} + y_{4i}-y_{3i} \right)$. 
{ \hfill $\triangle$}

\vspace{2mm} Next define an $n \times n$ diagonal matrix that has all $n$ assignment indicators for treatment arm 1 on the diagonal, 
\begin{align*}
\R_1 :=&
\left[ \begin{matrix}
R_{11} \\ & R_{12} \\ & & \ddots \\& & &  R_{1i} \\ & & & & \ddots& \\ & & & & & R_{1n} 
\end{matrix}\right], \hspace{2mm}
\end{align*}
and define $\R_2$, $\R_3$, $\hdots$, $\R_k$ analogously. Arrange these matrices to create the diagonal $kn \times kn$ matrix
\begin{align*}
\R :=&
\left[ \begin{matrix}
\R_{1} \\ & \R_{2} \\ & & \ddots \\& & &  \R_{k}
\end{matrix}\right] \hspace{2mm}
\end{align*}
and note the a $kn\times kn$ diagonal matrix of assignment probabilities can be written as $\bpi:=\E[\R]$, with the first $n$ diagonal elements representing probabilities of assignment to arm 1, then the next $n$ diagonal elements are probabilities of assignment to arm 2 and so on. 

In this alternative notation the researcher can be said to observe the assignment, $\R$, the observed vector of outcomes, $\R y$, and also a matrix of $l$ pre-treatment covariates, $\x$, which has size $n \times l$. In a randomized experiment $\bpi$ is also observed (known) in many cases. When intractable analytically, however, it might be estimated to arbitrary precision by repeating the original randomization until a target level of precision is achieved.

For covariate adjusted estimators, it is convenient to define the $kn \times (k+l) $ matrix,
\begin{align*}
\xx := & \left[
\begin{matrix}
1_{\scriptscriptstyle n} & & & & \x
\\  & 1_{\scriptscriptstyle n} & & & \x
\\  &  & \ddots & & \vdots
\\  &  & & 1_{\scriptscriptstyle n} & \x
\end{matrix} \right]
\end{align*}
which augments the intercept vector, $\onesmat$, with covariates. 

\begin{remark}
	For some cases, such adjusting for covariates separately by arm, it might be useful to define $\xx$ with $\x$ matrices arranged along a block-diagonal. In that case, it is useful to stipulate that $\x$ have columns that sum to zero to avoid problems of coefficient interpretation \citep[cf.][]{lin, middleton18}. This will be discussed further in paper 3 of 4. 
\end{remark}


\section{Linear estimators}\label{section.estimators}

This paper focuses on the variance, bounding and variance bound estimation of the class of estimators that are linear in the observed outcome, $y$. This class includes everything from the difference-of-means, to the Horvitz-Thomposon estimator, to regression.  

Note, however, that beyond presenting a general approach to variance bound estimation for the class of linear estimators, point estimation itself will be the focus of the third and fourth papers in the series. Questions such as consistency will be and causal identification will be considered then.  For now, suffice it to be said that an estimator need not be consistent for any quantity of interest at all (causal or otherwise) in order to derive variance expressions for it.

\subsection{Definition}

\begin{definition}[Linear Estimators]\label{linear.est}
	Linear estimators are defined as having the form,
	\begin{align} 
	\widehat{\delta}_c := &  c' \W \R y,
	\end{align}
	where $\W$ a matrix with $kn$ columns and $k$ rows if it is an unadjusted estimator and $k+l$ rows if it is a covariate adjusted estimator. The length of the contrast vector, $c$, is equal to the number of rows in $\W$. The first $k$ entries of $c$ are the contrast values, followed by $l$ zeros in the case of covariate adjusted estimators.
\end{definition}


Also, for convenience, define $\w$ to be $\W$ evaluated at $\R=\bpi$, i.e.,  
\begin{align}
\w:=\{ \left. \W\right|_{\R=\bpi}\}.
\end{align}


\begin{definition}[Horvitz-Thompson estimator]
	The Horvitz-Thompson estimator written as in Definition \ref{linear.est} with,
	\begin{align*}
	\W &=  \W^\tht :=\left(\onesmat'\onesmat\right)^{-1} \onesmat'\bpiInv ,
	\end{align*}
	noting that $\W^\tht=\w^{\tht}$ since $\W^\tht$ is nonrandom.
\end{definition}

\begin{definition}[Contrast-of-means]
	Contrast-of-means (e.g., difference-of-means) can be written as in Definition 2.1 with,
	\begin{align*}
	\W &= \W^\tcom  := \left(\onesmat' \R \onesmat \right)^{-1}\onesmat' .
	\end{align*}
\end{definition}

\begin{definition}[Hajek estimator]
	The Hajek estimator can be written as Definition (\ref{linear.est}) with,
	\begin{align*}
	\W &= \W^\thj  := \left(\onesmat' \bpiInv \R \onesmat \right)^{-1}\onesmat' \bpiInv  .
	\end{align*}
\end{definition}

\begin{definition}[OLS estimator]
	The OLS estimator can be written as Definition (\ref{linear.est}) with,
	\begin{align*}
	\W &= \W^\tols  := \left(\xx' \R \xx \right)^{-1}\xx '  .
	\end{align*}
\end{definition}

\begin{definition}[WLS estimators]
	WLS estimators can be written as in Definition 2.1 with,
	\begin{align*}
	\W &= \W^\twls  := \left(\xx' \m \R \xx \right)^{-1}\xx' \m .
	\end{align*}
\end{definition}

\begin{remark}
	WLS is a class that includes OLS, Hajek and contrast-of-means (e.g., difference-of-means) as special cases.  It is equivalent to OLS when $\m=\I_{kn}$ ($\I_{kn}$ is the identity matrix). If $\m=\I_{kn}$ and, in addition, $\xx=\onesmat$, WLS is OLS without covariates, which is equivalent to the contrast-of-means (e.g., in the two-arm case, we call this the difference-of-means), underscoring Theorem 1 in \cite{freedman08a}.  If $\m=\bpiInv$ and $\xx=\onesmat$, then it is the Hajek estimator.  The covariate adjusted WLS with $\m=\bpiInv$ will be discussed further in paper 3 of 4, because it is algebraically equivalent to the generalized regression estimator introduced there.
\end{remark}

\subsection{First-order Taylor approximation}

In this section, a general approach to obtaining asymptotically valid variance expressions for linear estimators is given using a first-order approximation of a Tyalor series.  The method is often used when an exact, closed-form variance expression is not tractable, as may be the case with any number of linear estimators. Examination of the $W$ vectors defined above shows that, with the exception of Horvitz-Thompson, the estimators all had random denominators (i.e., inverted random matrices), making closed form variance expressions difficult.

The original estimator and its Taylor approximation are asymptotically equivalent (cite Pashley). As such, the original estimator ``borrows" the closed-form variance expression given for the Taylor approximation, again justified given the asymptotic equivalence. 

\begin{lemma}[First-order Taylor approximation for linear estimators]
First, assume a linear estimator as defined in Definition 2.1. Then, let $\Big \{  \left. . \hspace{1mm} \right| _{\R = \bpi} \Big \}$ represent a function that evaluates the argument to the left of the vertical line at ${\R = \bpi}$. Similarly, let $\Big \{  \left. . \hspace{1mm} \right| _{\R = \bpi} \left(\R-\bpi\right)\Big \}$ evaluate its argument at ${\R = \bpi}$ and then multiply by $\left(\R-\bpi\right)$. Then from Taylor's theorem and the product rule, we have the first-order Taylor approximation, $\widehat{\delta} \approx \widehat{\delta}^\tl$, with 
\begin{align}\label{TaylorLinearization}
\widehat{\delta}^\tl_c
:= & \Big\{ \left. c' \W \R y \hspace{1mm} \right|_{\vspace{10mm }\scriptstyle \R =\bpi}\Big \}  
+\Big\{  \left. c' \W \right|_{\scriptstyle \R =\bpi} \Big \}\bigg \{  \left. \frac{\text{d}}{\text{d} \R} \R \right|_{\scriptstyle \R =\bpi} \left(\R-\bpi\right) \bigg \}  y \nonumber
\\ &\hspace{17mm}+ \bigg \{ \left. \frac{\text{d}}{\text{d} \R} c' \W \right|_{\vspace{10mm }\scriptstyle \R =\bpi} \left(\R-\bpi\right) \bigg\}  \Big\{  \left. \R \right|_{\scriptstyle \R = \bpi } \Big \}y \nonumber
\\ =& \hspace{2mm} a_c + c' \w  \R  y + \bigg \{ \left. \frac{\text{d}}{\text{d} \R} c' \W \right|_{\vspace{10mm }\scriptstyle \R =\bpi} \R \bigg\}  \bpi y  
\end{align}
where
\begin{align*}
a_c =&   - \bigg \{ \left. \frac{\text{d}}{\text{d} \R} c' \W \right|_{\vspace{10mm }\scriptstyle \R =\bpi} \bpi \bigg\} \bpi  y
\end{align*}
is a constant.
\end{lemma}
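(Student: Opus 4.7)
The plan is to apply Taylor's theorem directly to the matrix-valued map $\R \mapsto c'\W(\R)\R y$, expanded to first order around the point $\R = \bpi$. Since the only random object is $\R$ and everything else ($c$, $y$, $\bpi$) is fixed, this reduces to a standard multivariate Taylor expansion in the entries of $\R$, treated componentwise. The notational device $\{ . |_{\R=\bpi}(\R-\bpi)\}$ is precisely the Fr\'echet/directional derivative at $\bpi$ evaluated in the direction $(\R-\bpi)$, so that the first-order Taylor approximation takes the schematic form
\begin{align*}
\widehat{\delta}^\tl_c \;=\; \Big\{ c'\W\R y \,\Big|_{\R=\bpi}\Big\} \;+\; \Big\{ \tfrac{\text{d}}{\text{d}\R}\bigl(c'\W\R\bigr)\,\Big|_{\R=\bpi}(\R-\bpi)\Big\}\, y.
\end{align*}

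Next I would apply the product rule to the inner expression $c'\W(\R)\R$, splitting the derivative into the piece that differentiates $\R$ while holding $\W$ at $\bpi$, and the piece that differentiates $\W$ while holding $\R$ at $\bpi$. The first piece simplifies because $\frac{\text{d}}{\text{d}\R}\R\big|_{\R=\bpi}(\R-\bpi) = (\R-\bpi)$, giving the term $c'\w(\R-\bpi)y$; the second piece gives $\big\{\frac{\text{d}}{\text{d}\R} c'\W\big|_{\R=\bpi}(\R-\bpi)\big\}\,\bpi y$. Substituting $\w$ for $\W|_{\R=\bpi}$ and combining with the zeroth-order term $c'\w\bpi y$ then yields the intermediate expression stated in the lemma.

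The final step is pure algebraic rearrangement: expand $(\R-\bpi)$ in the linear-derivative term and split it into a piece depending on $\R$ and a constant piece depending only on $\bpi$. The $c'\w\bpi y$ from the zero-order expansion cancels against the corresponding term from $c'\w(\R-\bpi)y$, leaving $c'\w\R y$. The derivative-of-$\W$ term separates as $\{\tfrac{\text{d}}{\text{d}\R} c'\W|_{\R=\bpi}\R\}\bpi y \;-\; \{\tfrac{\text{d}}{\text{d}\R} c'\W|_{\R=\bpi}\bpi\}\bpi y$, and by definition the second of these is exactly $a_c$, which is nonrandom since it evaluates the derivative at the constant point $\bpi$ and multiplies by constants $\bpi$ and $y$.

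The main obstacle is notational rather than mathematical: one must make precise what $\frac{\text{d}}{\text{d}\R}$ means when $\R$ is a $kn\times kn$ matrix, and verify that the product rule applies in this Fr\'echet sense. In practice, because $\R$ enters $\W$ only through the diagonal entries $R_{ji}$ and $\W$ is smooth in a neighborhood of $\bpi$ for each of the estimators in Section \ref{section.estimators} (the denominators $\onesmat'\R\onesmat$, $\onesmat'\bpiInv\R\onesmat$, $\xx'\R\xx$, $\xx'\m\R\xx$ are all invertible at $\R=\bpi$), one may reduce to an ordinary multivariate Taylor expansion in the $kn$ diagonal entries of $\R$, where the product rule and the linearity of the derivative operator are standard. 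With this clarification, the identification of the three terms in (\ref{TaylorLinearization}) is immediate.
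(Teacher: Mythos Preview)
Your proposal is correct and follows exactly the approach the paper intends: the lemma statement itself simply invokes ``Taylor's theorem and the product rule'' without a separate proof environment, and you have supplied the algebraic details (splitting the product-rule terms, cancelling $c'\w\bpi y$ against the corresponding piece of $c'\w(\R-\bpi)y$, and separating the $\W$-derivative term into its $\R$-dependent and constant parts) that the paper leaves implicit. Your remark about interpreting $\tfrac{\text{d}}{\text{d}\R}$ as a Fr\'echet derivative in the diagonal entries of $\R$, with smoothness guaranteed by invertibility of the relevant Gram matrices at $\R=\bpi$, is a useful clarification that the paper does not spell out.
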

\begin{remark}
An expression for $a_c$ is given but it is not important for the purposes of variance approximations because the term is a constant. Recall that the purpose of deriving a first-order Taylor approximation, $\widehat{\delta}^\tl_c$, is to identify a closed-form variance expression that might then be ``borrowed" by the original linear estimator given in Definition \ref{linear.est}. 
\end{remark}

\begin{theorem}\label{theorem.Taylor.is.HT}
	For a constant, $a_c$, and vector of constants, $z_c$, first-order Taylor approximations for linear estimators may be written as,
		\begin{align*}
		\widehat{\delta}^\tl_c 
		 = & \hspace{1mm} a_c 
		+ n \ones{k}' \w^{\tht} \R z_c,
		\end{align*}
where $z_c$ has the form $z_c = \bpi \diag{ \mathbf{l} y} \mathbf{t}' c$ and where $(k \times kn)$ matrix $\mathbf{t}$ and $(kn \times kn)$ matrix $\mathbf{l}$ depend on the estimator. Hence, a first-order approximation of a Tyalor series using Taylor's theorem variance approximations will be expressed as the variance of a Horvitz-Thompson estimator of the ATE of $z_c$ with contrast vector $n\ones{k}$.
\end{theorem}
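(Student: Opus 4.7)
The plan is to start from the Taylor linearization supplied by the preceding lemma,
\begin{align*}
\widehat{\delta}^\tl_c = a_c + c'\w\R y + \bigg\{\left.\frac{\text{d}}{\text{d}\R}c'\W\right|_{\R=\bpi}\R\bigg\}\bpi y,
\end{align*}
and to massage the two stochastic pieces until they collapse into the Horvitz-Thompson-shaped $n\ones{k}'\w^\tht\R z_c$. Throughout I would exploit the fact that $\R$, $\bpi$, and $\bpiInv$ are all diagonal and therefore commute freely.

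The first move is to simplify the target itself. Because $\onesmat'\onesmat = n\I_k$, we have $\w^\tht = n^{-1}\onesmat'\bpiInv$, hence $n\ones{k}'\w^\tht = \ones{k}'\onesmat'\bpiInv = \ones{kn}'\bpiInv$. Substituting the asserted $z_c = \bpi\diag{\mathbf{l}y}\mathbf{t}'c$ and invoking $\bpiInv\R\bpi = \R$ gives
\begin{align*}
n\ones{k}'\w^\tht\R z_c = \ones{kn}'\R\diag{\mathbf{l}y}\mathbf{t}'c = c'\mathbf{t}\R\mathbf{l}y,
\end{align*}
where the last equality uses $\R\diag{\mathbf{l}y} = \diag{\R\mathbf{l}y}$ so that $\ones{kn}'\R\diag{\mathbf{l}y} = (\R\mathbf{l}y)'$. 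The theorem thus reduces to identifying $\mathbf{t}$ and $\mathbf{l}$ for which the random part of $\widehat{\delta}^\tl_c$ equals $c'\mathbf{t}\R\mathbf{l}y$.

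Next I would work out the derivative piece using the standard identity $\text{d}(A^{-1}) = -A^{-1}(\text{d}A)A^{-1}$. For the WLS family $\W = (\xx'\m\R\xx)^{-1}\xx'\m$, take $A(\R) = \xx'\m\R\xx$; together with the fact that $\partial\R/\partial R_{ii}$ is a rank-one selector, a direct calculation produces
\begin{align*}
\frac{\partial(c'\W\R y)}{\partial R_{ii}}\bigg|_{\R=\bpi} = (c'\w)_i\,\bigl[\bigl(\I_{kn}-\xx(\xx'\m\bpi\xx)^{-1}\xx'\m\bpi\bigr)y\bigr]_i.
\end{align*}
Assembling over $i$ against $(R_{ii}-\bpi_{ii})$, the $\bpi_{ii}$ pieces absorb into the constant $a_c$ and the $R_{ii}$ pieces aggregate into $c'\w\R\mathbf{l}y$ with $\mathbf{t} = \w$ and $\mathbf{l} = \I_{kn}-\xx(\xx'\m\bpi\xx)^{-1}\xx'\m\bpi$, a residual-maker-like operator. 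Horvitz-Thompson is the degenerate subcase with $\mathbf{t} = \w^\tht$ and $\mathbf{l} = \I_{kn}$, since its derivative term vanishes; contrast-of-means and Hajek fall out as WLS specializations with particular choices of $\m$ and $\xx$.

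The hard part will be this factorization step: a bilinear form in $(c,y)$ arising from an arbitrary Jacobian need not split index-by-index as $(c'\mathbf{t})_i(\mathbf{l}y)_i$, and it is the sandwich shape $\W = A(\R)^{-1}B$ that forces the required rank-one collapse at each $i$. Once this collapse is in hand, the identity $n\ones{k}'\w^\tht\R z_c = c'\mathbf{t}\R\mathbf{l}y$ derived in the first move immediately yields the claimed representation, and the interpretation of $\widehat{\delta}^\tl_c - a_c$ as a Horvitz-Thompson estimator of the transformed potential-outcome vector $z_c$ with contrast $n\ones{k}$ is then manifest.
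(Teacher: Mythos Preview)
Your proposal is correct and, at its core, runs the paper's own algebra in reverse: the paper asserts that the linearization has the form $a_c + c'\mathbf{t}\R\mathbf{l}y$ and then pushes forward via $\ones{kn}'\diag{c'\mathbf{t}}\R\mathbf{l}y \to \ones{kn}'\bpiInv\R\bpi\diag{\mathbf{l}y}\mathbf{t}'c \to n\ones{k}'\w^{\tht}\R z_c$, whereas you start from the target and collapse it back to $c'\mathbf{t}\R\mathbf{l}y$. The substantive difference is that you actually \emph{earn} the form $c'\mathbf{t}\R\mathbf{l}y$ by computing the partial derivatives for the WLS family and observing that the sandwich structure $\W=A(\R)^{-1}B$ forces the rank-one factorization $(c'\w)_i(\mathbf{l}y)_i$; the paper instead declares this step ``easy to see'' in the proof and supplies the WLS verification only afterward as a worked example. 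Your remark that an arbitrary Jacobian need not split this way is a genuine sharpening: it makes explicit that the theorem is really a statement about estimators whose $\W$ has this inverse-times-fixed-factor shape (which covers everything in Table~\ref{table.examples.linear.ests}), not about completely arbitrary $\W$.
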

\begin{proof}
    With Equation (\ref{TaylorLinearization}), it is easy to see that the Taylor linearized approximation has the form
    \begin{align*}
       \widehat{\delta}^\tl_c= & a_c +c' \mathbf{t} \R \mathbf{l} y
    \end{align*}
    where matrices $\mathbf{t}$ and $\mathbf{l}$ are $(k \times kn)$ and $(kn \times kn)$, respectively, and will depend on the estimator. Noting that $c' \mathbf{t}$ is a $(1 \times kn)$ vector, write 
    \begin{align*}
       \widehat{\delta}^\tl_c = & a_c +\ones{kn}' \diag{ c' \mathbf{t} } \R \mathbf{l} y
       \\ = & a_c +\ones{kn}' \R \diag{ c' \mathbf{t} } \mathbf{l} y
       \\ = & a_c +\ones{kn}' \bpiInv \R \bpi \diag{ c' \mathbf{t} } \mathbf{l} y
      \\ = & a_c + n \ones{k}' (\onesmat' \onesmat)^{-1} \onesmat' \bpiInv \R \bpi \diag{ c' \mathbf{t} } \mathbf{l} y
      \\ = & a_c + n \ones{k}' \w^\tht \R z_c
    \end{align*}
where $z_c := \bpi \diag{ \mathbf{l} y} \mathbf{t}' c$.
\end{proof}

\begin{remark}
	The result shows that first order Taylor approximations are Horvitz-Thompson estimators. This highlights the importance of studying Horvitz-Thompson variance in order to develop asymptotic variance expressions for linear estimators in general.
\end{remark}
\begin{remark}
	The constant vector $z_c$ is not directly observed. The next section will show that the plug-in principle provides a basis for asymptotically valid variance expressions.
\end{remark}

\setlength{\extrarowheight}{12pt}
\begin{table}[H]
	\begin{center} 
		\caption{Examples of linear estimators. $\W$ is as defined in Definition 2.1, $z_c$ is as defined in Theorem \ref{theorem.Taylor.is.HT}.}
		\begin{tabular}{l | c | c }\label{table.examples.linear.ests}
			Estimator &  $\W$ & $z_c$   
			\\ 	\hline Horvitz-Thompson 
			&	$  \left(\onesmat' \onesmat\right)^{-1} \onesmat' \bpiInv$
			&  $ \diagSmallBracket{y} \onesmat \left(\onesmat' \onesmat\right)^{-1} \hspace{-1mm} c$
			\\  \makecell[l]{Contrast-of-means}  &  $ \left(\onesmat' \R \onesmat\right)^{-1} \onesmat' $ 
			& $  \bpi \diagSmallBracket{y - \onesmat\left( \onesmat' \bpi \onesmat\right)^{-1} \onesmat' \bpi y} \onesmat \left( \onesmat' \bpi \onesmat\right)^{-1} \hspace{-1mm} c  $ 
			\\ Hajek &  $ \left(\onesmat' \bpiInv \R \onesmat\right)^{-1}\onesmat' \bpiInv $ 
			& $ \diagSmallBracket{y - \onesmat\left( \onesmat' \onesmat\right)^{-1} \onesmat'y} \onesmat  \left(\onesmat' \onesmat\right)^{-1}\hspace{-1mm} c $  
			\\ 
			OLS & $\left(\xx' \R \xx \right)^{-1}\xx'$ 
			& $ \bpi \diag{y- \xx b^{\tols}} \xx \left( \xx' \bpi \xx\right)^{-1} \hspace{-1mm} c $
			\\ WLS & $  \left(\xx' \m \R \xx \right)^{-1} \xx' \m $  
			& $ \bpi \diag{y- \xx b^{\twls}} \m  \xx \left( \xx' \m  \bpi \xx\right)^{-1} \hspace{-1mm} c $ 
			\\ \makecell[l]{Generalized reg. \\ \hspace{1mm} ($b=b^{\twls}$)} & $\w^\tht \left(\I_{kn}- \left( \R -\bpi \right)\xx \W^\twls\right) $ 
			& $\diag{y- \xx b^{\twls} } \onesmat \left(\onesmat' \onesmat\right)^{-1} \hspace{-1mm} c$ 
			\\ \makecell[l]{ IV } 
			&  \makecell{$  \big (\widehat{\tilde{\xx}}' \R \widehat{\tilde{\xx}} \big )^{-1} \widehat{\tilde{\xx}}' $ \hspace{2mm}with: 
			 \\ $\widehat{\tilde{\xx}}:= \zz \left( \zz' \R \zz\right)^{-1} \zz' \R \xx $
			}
			& \makecell{$ \bpi \diag{y- \xx b^{\tiv}} \tilde{\xx} \left( \tilde{\xx}' \bpi \tilde{\xx}\right)^{-1} \hspace{-1mm} c $ \hspace{2mm}with: 
			\\ $\tilde{\xx}:= \zz \left( \zz' \bpi \zz\right)^{-1} \zz' \bpi \xx$,  $b^{\tiv}:= \left( \tilde{\xx}' \bpi \tilde{\xx}\right)^{-1} \tilde{\xx}' \bpi y$ } 
	\end{tabular} \end{center}
\end{table}
\setlength{\extrarowheight}{0pt}

\exampleType{Weighted least squares} Weighted least squares is a class that includes OLS ($\m=\I_{kn}$), contrast-of-means (e.g., difference of means, with $\m=\I_{kn}$ and $\xx=\onesmat$) and the Hajek estimator ($\m=\bpiInv$ and $\xx=\onesmat$). To derive its Taylor approximation, first let $\w^\twls= {\left. {\W^{\twls}} \right| }_{\R=\bpi}= \left( \xx' \m \bpi \xx\right)^{-1} \xx' \m $, and note that by the rules of matrix differentiation the third term in Equation (\ref{TaylorLinearization}) is
\begin{align*}
    \bigg \{  \left. \frac{\text{d}}{\text{d} \R} c' \left( \xx' 
    \m \R \xx \right)^{-1}\xx' 
    \m  \right|_{\vspace{10mm }  \scriptstyle \R =\bpi}  \R  \bigg\}  & \bpi y  
    \\ =  -c' & \left( \xx' \m \bpi \xx \right)^{-1}  \bigg \{ \left. \frac{\text{d}}{\text{d} \R} \left(\xx' 
    \m \R \xx \right) \right|_{\vspace{10mm }\scriptstyle \R =\bpi} \R \bigg\} \left( \xx' \m \bpi \xx \right)^{-1}\xx' \m \bpi y
    \\ =  -c' & \left( \xx' \m \bpi \xx \right)^{-1}  \xx' \m \bigg \{ \left. \frac{\text{d}}{\text{d} \R}  
    \R \right|_{\vspace{10mm }\scriptstyle \R =\bpi} \R \bigg\} \xx   b^\twls 
    \\ =  -c' & \w^\twls  \R \xx   b^\twls. 
\end{align*}
Therefore, Equation (\ref{TaylorLinearization}) made specific to WLS is  
\begin{align*}
    \widehat{\delta}^{\tl (\twls)}= & a_c^\twls + c' \w^{\twls}\R y  -\w^\twls  \R \xx   b^\twls
    \\ = & a_c^\twls + c' \w^{\twls}\R \left( y-\xx   b^\twls\right)
    \\ = & a_c^\twls + \ones{kn}' \diag{c' \w^{\twls}} \R \diag{y-\xx   b^\twls} \ones{kn}
    \\ = & a_c^\twls + \ones{kn}' \R  \diag{y-\xx   b^\twls} {\w^{\twls}}' c
    \\ = & a_c^\twls + n\ones{k}' \w^\tht \R  z_c^\twls
\end{align*}
where $z_c^{\twls} = \bpi \diag{y- \xx b^{\twls}} {\w^\twls}' c $ is recognizable in the form given in Theorem \ref{theorem.Taylor.is.HT} with $\textbf{t}^\twls=\w^{\twls}$ and $\textbf{l}^{\twls} = \I_{kn} -\xx \left( \xx' \m \bpi \xx \right)^{-1} \xx' \m \bpi $ is a ``residual maker" matrix. 
{ \hfill $\triangle$}

\section{Variance}\label{section.var}

Now that the importance of Horvitz-Thompson estimators for asymptotic variance expressions for the entire class of linear estimators (which includes, for example, OLS, WLS, Hajek, and difference-of-means) has been established, this section will give the variance of HT estimators and first-order approximates of linear estimators.  

Throughout, we will make use of the $kn \times kn$ ``first order design matrix", which will allow for easy comparison of designs using spectral analysis.

\begin{definition}
The ``first-order design matrix" is a variance-covariance matrix of inverse-probability weighted treatment assignments, written,
\begin{align}\label{dmat}
	\dmat:=&\V \left(\ones{kn}'\bpiInv \R \right)
	\\ = & \left(\E\left[\R \ones{kn} \ones{kn}' \R \right] - \bpi \ones{kn} \ones{kn}' \bpi \right) / \left( \bpi \ones{kn} \ones{kn}' \bpi \right), \nonumber
\end{align}
where ``/" represents elementwise division.
\end{definition}


\begin{theorem}[Horvitz-Thompson Variance]\label{theorem.HTvar} An exact expression for the variance of Horvitz-Thompson estimators is given by 
    \begin{align*}
    \V\left(\widehat{\delta}^{\tht}_c \right) =  { z^{\tht}_c}'\dmat { z^{\tht}_c},
    \end{align*}
    where ${z_c^{\tht}}' = c' \left(\onesmat' \onesmat \right)^{-1} \onesmat' \diag{y}$.
\end{theorem}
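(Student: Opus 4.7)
The plan is to reduce $\widehat{\delta}^{\tht}_c$ to a scalar of the form $(\ones{kn}'\bpiInv \R)\, z^{\tht}_c$, where $z^{\tht}_c$ is nonrandom. Once this is achieved, the variance of the estimator is immediately $(z^{\tht}_c)' \V(\bpiInv \R \ones{kn})\, z^{\tht}_c$, and the remaining task is simply to recognize $\V(\bpiInv \R \ones{kn})$ as $\dmat$ via the Definition.

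First I would write $\widehat{\delta}^{\tht}_c = c'(\onesmat'\onesmat)^{-1}\onesmat'\bpiInv \R\, y$ straight from the Horvitz--Thompson definition. Setting $u := \onesmat(\onesmat'\onesmat)^{-1} c$ gives $\widehat{\delta}^{\tht}_c = u' \bpiInv \R\, y = \ones{kn}' \diag{u}\, \bpiInv \R\, y$. Because $\diag{u}$, $\bpiInv$, and $\R$ are all diagonal, they commute, so I can slide $\diag{u}$ past $\bpiInv \R$ and use $\diag{u}\, y = \diag{y}\, u$ to obtain $\widehat{\delta}^{\tht}_c = \ones{kn}' \bpiInv \R\, \diag{y}\, u$. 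Taking $z^{\tht}_c := \diag{y}\, u = \diag{y}\,\onesmat(\onesmat'\onesmat)^{-1} c$ yields exactly the form stated in the theorem after transposition.

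Next, since $z^{\tht}_c$ is a constant vector (it depends only on $y$, $\onesmat$, and $c$), the scalar $\widehat{\delta}^{\tht}_c = (\bpiInv \R \ones{kn})' z^{\tht}_c$ has variance $(z^{\tht}_c)' \V(\bpiInv \R \ones{kn})\, z^{\tht}_c$. To finish, I would verify entrywise that $\V(\bpiInv \R \ones{kn})$ equals $\dmat$ as defined. The $(i,j)$ entry of the former is $\cov(R_{ii}/\pi_i, R_{jj}/\pi_j) = (\E[R_{ii} R_{jj}] - \pi_i \pi_j)/(\pi_i \pi_j)$, and the second line of the Definition of $\dmat$ is precisely this same quantity written in matrix form with elementwise division.

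The main obstacle is really just bookkeeping with diagonal matrices: one has to be careful when moving $\diag{y}$ past $\bpiInv \R$ and when matching the elementwise-division notation in the Definition with a genuine covariance matrix. Everything else is a one-line application of $\V(q'z) = z'\V(q)z$ for nonrandom $z$, so the proof should be short.
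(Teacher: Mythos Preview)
Your proposal is correct and follows essentially the same route as the paper: both arguments rewrite $\widehat{\delta}^{\tht}_c$ as a nonrandom vector times $\bpiInv \R \ones{kn}$ by exploiting commutativity of diagonal matrices, then invoke $\V(q'z)=z'\V(q)z$ and the definition of $\dmat$. The only cosmetic difference is that the paper inserts $y=\diag{y}\ones{kn}$ on the right and slides $\diag{y}$ leftward, whereas you insert $u'=\ones{kn}'\diag{u}$ on the left and slide $\diag{u}$ rightward; the resulting $z^{\tht}_c$ and the final variance expression are identical.
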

\begin{proof}
    Using the identity $y=\diag{y} \ones{kn}$, the Horvitz-Thompson estimator can be written
    \begin{align*}
        \widehat{\delta}^{\tht}
        = &  c' \left(\onesmat' \onesmat \right)^{-1} \onesmat' \bpiInv \R \diag{y} \ones{kn}
        \\         = &  c' \left(\onesmat' \onesmat \right)^{-1} \onesmat'\diag{y} \bpiInv \R  \ones{kn}.
                \\         = &  { z^{\tht}_c}' \bpiInv \R  \ones{kn}.
    \end{align*}
    where ${z_c^{\tht}}' = c' \left(\onesmat' \onesmat \right)^{-1} \onesmat' \diag{y}$. 
    So the variance can be written,
        \begin{align*}
    \V\left(\widehat{\delta}^{\tl}_c \right) & =  { z^{\tht}_c}'\V \left(\ones{kn}'\bpiInv \R \right) { z^{\tht}_c}
    \\ & = { z^{\tht}_c}'\dmat { z^{\tht}_c}
    \end{align*}
\end{proof}

\begin{theorem}[Variance of first-order Taylor approximations] The variance of first-order Taylor approximations of linear estimators can be written as,
\begin{align} \label{var.general}
\V\left(\widehat{\delta}^{\tl}_c \right) =  {z_c^\tl}' \dmat z_c^\tl,
\end{align}
with examples of $z^\tl_c$ given in Table \ref{table.examples.linear.ests}. 
\end{theorem}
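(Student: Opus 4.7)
The plan is to lean entirely on Theorem~\ref{theorem.Taylor.is.HT}, which already rewrote any Taylor-linearized linear estimator as
\[
\widehat{\delta}^\tl_c = a_c + n\ones{k}'\w^\tht \R z^\tl_c,
\]
i.e.\ structurally a Horvitz--Thompson estimator applied to the deterministic vector $z^\tl_c$. Since Theorem~\ref{theorem.HTvar} has just handled an estimator of exactly this shape, the work reduces to checking that the same algebra goes through with $z^\tl_c$ in place of $y$, producing the claimed quadratic form in $\dmat$.

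First I would drop the nonrandom constant $a_c$, so that $\V(\widehat{\delta}^\tl_c) = \V(n\ones{k}'\w^\tht \R z^\tl_c)$. Next I would simplify the prefactor: since $\onesmat'\onesmat = n\I_{k}$,
\[
n\ones{k}'\w^\tht
= n\ones{k}'(\onesmat'\onesmat)^{-1}\onesmat'\bpiInv
= \ones{k}'\onesmat'\bpiInv
= \ones{kn}'\bpiInv,
\]
so the centered estimator collapses to $\ones{kn}'\bpiInv \R\, z^\tl_c$.

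From here I would mirror the manipulation in the proof of Theorem~\ref{theorem.HTvar}: write $z^\tl_c = \diag{z^\tl_c}\ones{kn}$, commute $\diag{z^\tl_c}$ past the diagonal matrices $\bpiInv$ and $\R$, and rearrange to obtain $\widehat{\delta}^\tl_c - a_c = (\ones{kn}'\bpiInv\R)\, z^\tl_c$. Because $z^\tl_c$ is nonrandom, taking the variance and invoking the definition $\dmat = \V(\ones{kn}'\bpiInv\R)$ gives ${\V}(\widehat{\delta}^\tl_c) = {z^\tl_c}'\dmat\, z^\tl_c$ directly.

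The main obstacle is bookkeeping rather than conceptual: I need to be confident that the $z^\tl_c$ produced by Theorem~\ref{theorem.Taylor.is.HT}, namely $z^\tl_c = \bpi\diag{\mathbf{l} y}\mathbf{t}'c$, is genuinely nonrandom for every estimator in Table~\ref{table.examples.linear.ests}. The generic form depends only on $\bpi$, $y$, $c$, and on the estimator-specific matrices $\mathbf{t}$ and $\mathbf{l}$ that are evaluated at $\R=\bpi$, so none of its ingredients depend on the realized assignment. That check is quick but indispensable: it is exactly what licenses pulling $z^\tl_c$ outside the variance as a deterministic weight vector and reading the result off as a quadratic form in $\dmat$.
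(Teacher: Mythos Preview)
Your proposal is correct and essentially matches the paper's argument. The paper invokes Theorem~\ref{theorem.HTvar} as a black box---substituting $z^\tl_c$ for the outcome and $n\ones{k}$ for the contrast, then simplifying $n\ones{k}'(\onesmat'\onesmat)^{-1}\onesmat'\diag{z^\tl_c}={z^\tl_c}'$---whereas you unpack that theorem's proof inline by reducing $n\ones{k}'\w^\tht$ to $\ones{kn}'\bpiInv$ and appealing directly to the definition of $\dmat$; the two routes are the same up to presentation.
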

\begin{proof}
    By Theorem \ref{theorem.Taylor.is.HT}, linear approximations are Horvitz-Thompson estimators of a vector $z_c$ and contrast vector $n\ones{k}$. Now, $ n \ones{k}' \left( \onesmat' \onesmat \right)^{-1} \onesmat' \diag{z^\tl_c}= \ones{kn}' \diag{z^\tl_c} ={z^\tl_c} $. Therefore, using Theorem \ref{theorem.HTvar},
    \begin{align*}
    \V\left(\widehat{\delta}^{\tl}_c \right) & = n \ones{k}' \left(\onesmat' \onesmat \right)^{-1} \onesmat' \diag{z^\tl_c} \dmat  \diag{z^\tl_c} \onesmat \left(\onesmat' \onesmat \right)^{-1} \ones{k} n 
    \\ & = {z^\tl_c}' \dmat {z^\tl_c}
    \end{align*}
 
\end{proof}

\begin{remark}
	Equation (\ref{var.general}) is an \textit{exact} variance expression, however, they are not identified. The next subsection introduces the necessary concept of variance bounding.
\end{remark}

\begin{remark}
	The design matrix, $\dmat$, will provide useful device in the study the best designs for outcomes with different characteristics as the next example will show.
\end{remark}


\exampleType{Comparing complete randomization and paired randomization}  Consider a treatment/control (two-arm) experiment that is pair-randomized. A pair-randomized design is a special case of a block-randomized (i.e., stratified) design where blocks have size 2.  In each pair/block, one unit is assigned to treatment and the other in control with equal (.5) probability. Across blocks, assignments are independent. 

When $n=4$ (and assuming w.l.o.g. that the data are sorted by pair), the design matrix is
\begin{align*}
\dmat^{pr} = \left[\begin{smallmatrix}
\hspace{2.2mm}1 & -1 & & & -1 & \hspace{2.2mm}1 &  
\\ -1 & \hspace{2.2mm}1 & & & \hspace{2.2mm}1 & -1
\\ & & \hspace{2.2mm}1 & -1 & & & -1 & \hspace{2.2mm}1
\\ & & -1 & \hspace{2.2mm}1 & & & \hspace{2.2mm}1 & -1
\\ -1 & \hspace{2.2mm}1 & & & \hspace{2.2mm}1 & -1
\\ \hspace{2.2mm}1 & -1 & & & -1 & \hspace{2.2mm}1 & 
\\ & & -1 & \hspace{2.2mm}1 & & & \hspace{2.2mm}1 & -1
\\ & & \hspace{2.2mm}1 & -1 & & & -1 & \hspace{2.2mm}1 & 
\end{smallmatrix}\right],
\end{align*}
and note that empty cells are 0. The design matrix for complete randomization (where 2 of 4 are randomly assigned to treatment) is
\begin{align*}
\dmat^{cr} = \left[\begin{smallmatrix}
\hspace{1mm}1 & \nicefrac{\text{-}1}{3} & \nicefrac{\text{-}1}{3} & \nicefrac{\text{-}1}{3} & 
  \text{-}{1} & \hspace{1mm}\nicefrac{1}{3} &  \hspace{1mm}\nicefrac{1}{3}& \hspace{1mm}\nicefrac{1}{3}
\\ \nicefrac{\text{-}1}{3}  & \hspace{1mm}1 & \nicefrac{\text{-}1}{3}  & \nicefrac{\text{-}1}{3}  & 
     \hspace{1mm}\nicefrac{1}{3} & \text{-}1 & \hspace{1mm}\nicefrac{1}{3} & \hspace{1mm}\nicefrac{1}{3} 
\\ \nicefrac{\text{-}1}{3}  & \nicefrac{\text{-}1}{3}  & \hspace{1mm}1 & \nicefrac{\text{-}1}{3}  & 
    \hspace{1mm}\nicefrac{1}{3}& \hspace{1mm}\nicefrac{1}{3}& \text{-}1 & \hspace{1mm}\nicefrac{1}{3}
\\ \nicefrac{\text{-}1}{3} & \nicefrac{\text{-}1}{3} & \nicefrac{\text{-}1}{3} & \hspace{1mm}1 &\hspace{1mm}\nicefrac{1}{3} & \hspace{1mm}\nicefrac{1}{3}& \hspace{1mm}\nicefrac{1}{3} & \text{-}1
\\ \text{-}1 & \hspace{1mm}\nicefrac{1}{3} & \hspace{1mm}\nicefrac{1}{3} & \hspace{1mm}\nicefrac{1}{3} & 
    \hspace{1mm}1 & \nicefrac{\text{-}1}{3} & \nicefrac{\text{-}1}{3} & \nicefrac{\text{-}1}{3} 
\\ \hspace{1mm}\nicefrac{1}{3} & \text{-}1 & \hspace{1mm}\nicefrac{1}{3} & \hspace{1mm}\nicefrac{1}{3}   & 
  \nicefrac{\text{-}1}{3}  & \hspace{1mm}1 & \nicefrac{\text{-}1}{3}  & \nicefrac{\text{-}1}{3}  & 
\\  \hspace{1mm}\nicefrac{1}{3} &  \hspace{1mm}\nicefrac{1}{3} & \text{-}1  & \hspace{1mm}\nicefrac{1}{3} & 
  \nicefrac{\text{-}1}{3}  & \nicefrac{\text{-}1}{3}  & \hspace{1mm}1 & \nicefrac{\text{-}1}{3}  & 
\\ \hspace{1mm}\nicefrac{1}{3} & \hspace{1mm}\nicefrac{1}{3} & \hspace{1mm}\nicefrac{1}{3}  & \text{-}1 & 
    \nicefrac{\text{-}1}{3} & \nicefrac{\text{-}1}{3} & \nicefrac{\text{-}1}{3} & \hspace{1mm}1 
\end{smallmatrix}\right].
\end{align*}
Eigendecomposition of $\dmat^{cr}-\dmat^{pr}$ gives eigenvalues $2.67, 0, 0, 0, 0, 0, -1.33,$ and $-1.33$ corresponding eigenvectors in Table \ref{table.eigvecs}.  The eigenvectors associated with nonzero eigenvalues provide insight into the subspace in $ {\mathds {R}}^{\scriptscriptstyle 2n}$ where one design may be preferable to another, for example, when the estimator is difference-in-means (which is equivalent to both Horvitz-Thopson and Hajek for these designs).  

\begin{table}[ht]
\begin{center}\caption{Eigenvectors of $\dmat^{cr}-\dmat^{pr}$}
	\begin{tabular}{rrrrrrrr}\label{table.eigvecs}
		e1 \hspace{1mm} & e2 \hspace{1mm} & e3 \hspace{3mm}& e4\hspace{2mm} & e5\hspace{2mm} & e6\hspace{2mm} & e7\hspace{2mm} & e8\hspace{2mm} \\ 
		\hline
		-0.354 & 0.791 & 0.000 & 0.000 & 0.000 & 0.000 & 0.500 & 0.000 \\ 
		 -0.354 & 0.158 & -0.573 & -0.178 & -0.250 & -0.421 & -0.500 & 0.000 \\ 
		0.354 & 0.158 & -0.180 & -0.450 & 0.585 & -0.149 & 0.000 & 0.500 \\ 
		0.354 & 0.158 & 0.319 & -0.600 & -0.260 & -0.264 & 0.000 & -0.500 \\ 
		0.354 & 0.474 & 0.282 & 0.524 & 0.100 & -0.190 & -0.500 & 0.000 \\ 
		0.354 & -0.158 & -0.291 & 0.346 & -0.150 & -0.611 & 0.500 & 0.000 \\ 
		-0.354 & -0.158 & 0.102 & 0.073 & 0.685 & -0.339 & -0.000 & -0.500 \\ 
		-0.354 & -0.158 & 0.602 & -0.077 & -0.161 & -0.454 & -0.000 & 0.500 \\ 
		\hline
	\end{tabular}
\end{center}
\end{table}

In this example, assuming the contrast matrix is $c=\left(-1,1\right)'$, and examining the first eigenvector with eigenvalue 2.67, one can conclude that if the outcomes for the four units given in Table \ref{table.bestcase.cr}, then the difference-of-means would be much less precise under the completely randomized design. So, the eigenvector in a sense represents a ``best-case" (normed) potential outcome vector for paired randomization.  Inspection of the outcomes themselves confirms the intuition that pair randomization is better than complete randomization when units are homogenous within pairs.  
\begin{table}[H]
	\begin{center}\caption{Pair randomization better than complete randomization}
\begin{tabular}{c | c | c  c}\label{table.bestcase.cr}
	unit id & pair id & $y_0$ & $y_1$  \\ \hline
		 1 & 1 &\hspace{1mm}.3536 & \hspace{1mm}.3536  \\
		 2 & 1 &\hspace{1mm}.3536 & \hspace{1mm}.3536  \\
		 3 & 2 &-.3536 & -.3536  \\
		 4 & 2 & -.3536 & -.3536  
\end{tabular}
	\end{center} 
\end{table}
Next, considering the two eigenvectors associated with the eigenvalue -1.33, we see the implied potential outcomes in Table \ref{table.bestcase.pr} give potential outcomes for which complete randomization is preferable.  Note that either of the two sets is a ``worst-case" scenario for paired randomization, as is any set of potential outcomes that can be generated by linear combinations of the two eigenvectors. Inspection of these outcomes is consistent with the observation that complete randomization can be better than paired randomization when paired units are maximally heterogeneous.
\begin{table}[H]
	\begin{center}\caption{Complete randomization better than pair randomization}
		\begin{tabular}{c | c | c  c | c c}\label{table.bestcase.pr}
			unit id & pair id & $y_0$ & $y_1$ & $y_0$ & $y_1$  \\ \hline
			1 & 1 & -.5 & -.5  &\hspace{2mm}0 & \hspace{2mm}0   \\
			2 & 1 &\hspace{1mm}.5 & \hspace{1mm}.5 &\hspace{2mm}0 & \hspace{2mm}0   \\
			3 & 2 &\hspace{2mm}0 & \hspace{2mm}0 & -.5 & -.5   \\
			4 & 2 & \hspace{2mm}0 & \hspace{2mm}0 &\hspace{1mm}.5 & \hspace{1mm}.5 
		\end{tabular}
	\end{center} 
\end{table}
{ \hfill $\triangle$}

\begin{remark}
	The example illustrates a relatively effortless method of identifying key insights about arbitrary designs through spectral analyses of first-order design matrices. In the example, the observation that pair randomization can hurt precision when units are not homogeneous within pairs is not new. However, this approach to comparing designs is perfectly general and can be applied to virtually any designs.  
\end{remark}

\section{Variance bounds}\label{section.var.bounds}

In spite of an exact expression for first-order Taylor approximations in Equation (\ref{var_lin_est}), the quantity is never identified because not all terms in the quadratic can be observed. Even if the elements of $\R z_c$ were observed directly (which is the case for $\R  z_c^\tht$ but none of the other examples in Table 1), some pairs of potential outcomes can never be jointly observed. For example, for a given unit, only one of two (or more) potential outcomes can be observed, a problem is referred to as the ``fundamental problem of causal inference" \citep{holland}.  Other design features, such as clustering or pair randomization, can also render various combinations of potential outcomes unobservable.

Starting with Neyman (1923) one proposed solution to unidentified variance has been to estimate a {\it variance bound}, i.e., a quantity that is provably greater than the variance, but which is identified. It should be understood that while the term \textit{variance estimation} is often used as a shorthand in the literature, it is not, in general, an accurate phrase.  \textit{Variance bound estimation} is a more precise so it will be used here.

\begin{definition}[Variance bound matrix]\label{def.varbound}
	Let $\tilde{\dmat}$ be an arbitrary $kn \times kn$ matrix and let be $z$ an arbitrary vector with length $kn$. Then $\tilde{ \dmat }$ is a \textnormal{variance bound matrix} (or \textnormal{bounding matrix}) for $\dmat$ if, for all $z\in\mathds{R}^{kn}$, $z'\dmat z \leq  z'\tilde{\dmat}z$. 
\end{definition}

\begin{lemma}\label{psd}
	$\tilde{ \dmat }$ is a bounding matrix $\dmat $ if and only if matrix $\tilde{ \dmat }-\dmat$ is positive semi-definite. 
\end{lemma}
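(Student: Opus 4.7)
The plan is essentially to observe that this is a restatement of the definition of positive semi-definiteness, so the proof should be a one-step rearrangement. First I would rewrite the bounding condition from Definition \ref{def.varbound}, namely $z'\dmat z \leq z'\tilde{\dmat}z$ for all $z \in \mathds{R}^{kn}$, as $z'(\tilde{\dmat}-\dmat)z \geq 0$ for all $z \in \mathds{R}^{kn}$ by the linearity of the quadratic form in its matrix argument. Then I would invoke the standard characterization that a (symmetric) matrix $A$ is positive semi-definite if and only if $z'Az \geq 0$ for every $z$, and apply it to $A = \tilde{\dmat} - \dmat$, giving both directions of the ``if and only if'' simultaneously.

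The only subtlety worth flagging is symmetry. The matrix $\dmat$ is symmetric by construction because it is the variance-covariance matrix in Equation (\ref{dmat}), and to make the lemma clean one typically restricts attention to symmetric bounding matrices $\tilde{\dmat}$ as well (otherwise $z'\tilde{\dmat}z$ only depends on the symmetric part $(\tilde{\dmat}+\tilde{\dmat}')/2$, and the ``PSD'' conclusion would have to be phrased for that symmetric part). I would therefore state the symmetry assumption explicitly at the start of the proof, or alternatively note that the quadratic form condition determines $\tilde{\dmat}-\dmat$ only up to its symmetric part and that positive semi-definiteness is understood in that sense.

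Since both implications fall out of the same algebraic identity, I would write the proof as a single short display: assume either (a) $\tilde{\dmat}$ is a bounding matrix, or (b) $\tilde{\dmat}-\dmat$ is PSD, and in each case note that the conclusion is obtained by reading the equivalence $z'\dmat z \leq z'\tilde{\dmat}z \iff z'(\tilde{\dmat}-\dmat)z \geq 0$ in the corresponding direction, with the quantifier ``for all $z\in\mathds{R}^{kn}$'' carrying through.

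The main obstacle is really just presentational rather than mathematical: making sure the reader sees that no deeper content is being invoked beyond the definition of PSD, and being precise about the (implicit) symmetry assumption on $\tilde{\dmat}$ so that the biconditional is unambiguous. No asymptotics, no appeal to Theorems \ref{theorem.HTvar} or \ref{theorem.Taylor.is.HT}, and no computation with $\bpi$, $\R$, or $\onesmat$ is required.
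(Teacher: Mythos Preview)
Your proposal is correct and follows essentially the same one-line approach as the paper, namely rewriting $z'\dmat z \leq z'\tilde{\dmat}z$ as $z'(\tilde{\dmat}-\dmat)z \geq 0$ and invoking the definition of positive semi-definiteness. In fact your version is slightly more complete: the paper's proof only spells out the forward direction, whereas you note that the same identity gives both implications, and you also flag the symmetry caveat (which the paper leaves implicit).
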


\begin{proof}
	By the definition of a bound, $ z'\tilde{\dmat}z - z'\dmat z \geq 0$ for all $z\in \mathds{R}^{kn}$. This implies that $z'(\tilde{\dmat}-\dmat ) z \geq 0$, i.e., that $ \tilde{\dmat}-\dmat  $ is positive semi-definite.  
\end{proof}

\begin{definition}[Identified variance bound]\label{def.identified.bound}
	Let $\tilde{\dmat}$ be bounding matrix for $\dmat$. It gives an \textnormal{identified variance bound} if 
	\begin{align*}
	\bfI (\dmat=-1) 
	\circ \bfI(\tilde{\dmat}= 0)=\bfI (\dmat=-1)
	\end{align*}
	where $\circ$ is element-wise multiplication, $\bfI\left(\dmat=-1\right)$ is an indicator function returning an $kn \times kn$ matrix of ones and zeros indicating whether each element of $\dmat$ is equal to $-1$ (an indication that the associated term in the variance quadratic is impossible to observe), and $\bfI (\tilde{ \dmat } =0 )$ is, similarly, an indicator function returning an $kn \times kn$ matrix of ones and zeros indicating the location of zeros in $\tilde{ \dmat }$.	
\end{definition}


\subsection{Generalizing Neyman's variance bound}\label{section.GNbound}

This section proposes a generalization of Neyman's (1923) variance bound. Let matrix $\dmat$ be partitioned into $k^2$ partitions of size $n \times n$. Then for $r,s \in \{1, 2, ..., k\}$, let the $\dmat_{rs}$ be the $(r,s)^{th}$ partition, having dimension $n \times n$.  Also, let $c_r$ be the $r^{th}$ element of the length-$k$ contrast vector, $c$.
Then the following bounding method produces an identified bound for experiments when partitions $\textnormal{I}(\tilde{\dmat}^\tgn_{rr}==-1)=0_{\scriptscriptstyle n\times n}$, i.e., there are no $-1$ values in the diagonal blocks, and $\tilde{\dmat}^\tgn_{rs}=\tilde{\dmat}^\tgn_{tu}$ for $r\neq s, t\neq u \in {1,2,...,k}$ and $\sum_i c_i = 0$. Designs that meet this condition include complete randomization, cluster-randomization and block-randomization.
\begin{definition}[Generalized Neyman variance bound]
	The ``Generalized Neyman bound" is the is the bound corresponding to the block-diagonal bounding matrix, $\dtilde^{\tgn}$, with block $(r,r)$ given by,
	\begin{align*}
	\dtilde^\tgn_{rr} :=& \sum_{s=1}^{k} \frac{c_r}{c_s}\dmat_{rs} 
	\end{align*}
where $c_r$ and $c_s$ are, respectively, elements $r$ and $s$ from from the contrast vector, $c$.  
\end{definition}\label{definition.neyman.bound}

\begin{theorem}
The generalized Neyman bound, with $\tilde{\dmat}^\tgn$ given in Definition \ref{definition.neyman.bound} is an identified variance bound when partitions $\textnormal{I}(\tilde{\dmat}^\tgn_{rr}==-1)=0_{\scriptscriptstyle n\times n}$, i.e., there are no $-1$ values in the diagonal blocks, $\tilde{\dmat}^\tgn_{rs}=\tilde{\dmat}^\tgn_{tu}$ for $r\neq s, t\neq u \in {1,2,...,k}$, and $\sum_i c_i = 0$.
\end{theorem}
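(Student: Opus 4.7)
My plan is to verify the two parts of Definition~\ref{def.identified.bound} in turn. First I would check the identification condition, which should be essentially automatic: all off-diagonal-block entries of the block-diagonal $\tilde{\dmat}^\tgn$ vanish by construction, and the hypothesis that no $-1$ appears in the diagonal blocks forces every $-1$ entry of $\dmat$ into an off-diagonal block of $\dmat$ and hence to a position where $\tilde{\dmat}^\tgn = 0$. Together these give $\bfI(\dmat=-1) \circ \bfI(\tilde{\dmat}^\tgn=0) = \bfI(\dmat=-1)$.

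The harder half is showing that $\tilde{\dmat}^\tgn - \dmat$ is positive semi-definite, so that by Lemma~\ref{psd} the matrix $\tilde{\dmat}^\tgn$ qualifies as a variance bound. I would partition any $z \in \mathbb{R}^{kn}$ into blocks $z_1,\ldots,z_k \in \mathbb{R}^n$ and expand using the block forms. The $r=s$ terms cancel, leaving
\begin{equation*}
z'(\tilde{\dmat}^\tgn - \dmat)z \;=\; \sum_{r \neq s}\left[ \frac{c_r}{c_s}\, z_r' \dmat_{rs} z_r \;-\; z_r' \dmat_{rs} z_s \right].
\end{equation*}
Reading the equality hypothesis on the off-diagonal blocks as saying $\dmat_{rs} = \dmat^o$ for all $r \neq s$, and using $\sum_r c_r = 0$ to collapse the $\sum_{s \neq r} 1/c_s$ factors, this reduces to a quadratic form in the $z_r$ involving only the common off-diagonal block $\dmat^o$.

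The main obstacle will be turning that reduced quadratic into a manifestly non-negative expression. I plan to use two subsidiary facts. First, that $\dmat^o$ itself is negative semi-definite: the row-stochastic constraint $\sum_r R_{ri}=1$ forces $\sum_r \pi_{ri}\,\dmat(ri,\cdot) = 0$ and, under the block-uniform design implied by the hypothesis, yields the identity $\dmat_{rr} = -\frac{1-\pi_r}{\pi_r}\dmat^o$; since $\dmat_{rr}$ is a PSD covariance block, $\dmat^o$ must be NSD. Second, I would complete a square to rewrite the reduced expression as $-w'\dmat^o w$ for some linear combination $w = \sum_r \alpha_r z_r$, after which NSD of $\dmat^o$ closes the argument. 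Choosing the $\alpha_r$ so that the cross-term weights match the $c_r/c_s$ structure is where the $\sum_r c_r = 0$ constraint should do its essential work, and pinning down that matching is the technically delicate step.
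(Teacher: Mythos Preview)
There is a concrete gap in the plan. Your key simplification---``using $\sum_r c_r = 0$ to collapse the $\sum_{s\neq r} 1/c_s$ factors''---does not go through: the contrast constraint controls $\sum_s c_s$, not $\sum_s 1/c_s$, and those reciprocal sums do not simplify. In fact, with the block formula $\dtilde^\tgn_{rr}=\sum_s (c_r/c_s)\dmat_{rs}$ taken literally, the matrix $\dtilde^\tgn-\dmat$ is \emph{not} positive semi-definite in general for $k\geq 3$. For example, with $k=3$, $c=(1,1,-2)$, and common off-diagonal block $\dmat^o$, you get $(\dtilde^\tgn-\dmat)_{11}=\tfrac12\dmat^o$; evaluating the quadratic form at $z=(v,0,0)$ yields $\tfrac12 v'\dmat^o v\leq 0$, strictly negative whenever $v$ is outside the null space of $\dmat^o$. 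So the route via Lemma~\ref{psd} cannot be completed as stated.

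The paper's proof does not attempt to establish full PSD of $\dtilde^\tgn-\dmat$. Instead it evaluates both quadratic forms only at the structured vector $z^\tht$ whose $r$-th block is $c_r y_r/n$, obtains the gap $\sum_{r<s}c_rc_s\,\tau_{rs}'\dmat_{12}\tau_{rs}$ with $\tau_{rs}=y_r-y_s$, and then uses the telescoping identity $\tau_{rs}=\tau_{rk}-\tau_{sk}$ together with $\sum_s c_s=0$ to rewrite this as $-{\tau^*}'\dmat_{12}\tau^*$ with $\tau^*=\sum_s c_s\tau_{sk}$. The extra factor of $c_r$ carried by the block of $z^\tht$ is exactly what turns the problematic $c_r/c_s$ weights into the tractable $c_rc_s$ weights; an arbitrary $z$ does not carry that factor.

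It is worth noting that the paper's own proof algebra is consistent with $\dtilde^\tgn_{rr}=\sum_s (c_s/c_r)\dmat_{rs}$ rather than the stated $(c_r/c_s)$. Under that reading, and using $\sum_{s\neq r}c_s=-c_r$, every block of $\dtilde^\tgn-\dmat$ (diagonal and off-diagonal) equals $-\dmat^o$, so $z'(\dtilde^\tgn-\dmat)z=-\bigl(\sum_r z_r\bigr)'\dmat^o\bigl(\sum_r z_r\bigr)\geq 0$ by NSD of $\dmat^o$. That is precisely your completing-the-square with $w=\sum_r z_r$, and it is cleaner than the treatment-effect argument the paper runs. So your overall strategy is sound once the block formula is read with the ratio inverted; the gap is that, with the ratio as printed, the reciprocal collapse you planned is invalid and PSD genuinely fails.
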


\begin{proof}
First, with $z_c^\tht =\diagSmallBracket{y} \onesmat (\onesmat' \onesmat){}^{-1}$ and letting $y_r$ be the length-$n$ vector of potential outcomes for the $r^{th}$ treatment arm and $\tilde{\dmat}^\tgn_{rs}$ be the $r,s$ partition of $\dmat$, we have
\begin{align*}
n^2 {z^{\tht}}' \dmat z^\tht = & y' \diagSmallBracket{c'\onesmat} \dmat \diagSmallBracket{c'\onesmat} y
\\ = & \sum_{r=1}^k c_r^2 y_r' \dmat_{rr} y_r +\sum_{r=1}^{k-1} \sum_{s=1}^{k}  c_r c_s \left(y_r' \dmat_{rs} y_s+y_s' \dmat_{sr} y_r\right)
\end{align*}
Next, define the $r,s$ treatment effect as $\tau_{rs}:=y_r-y_s$ and note that $\dmat_{12}=\dmat_{rs}$ for $r\neq s$. Then, by the definition of $\tilde{\dmat}^\tgn$, 
\begin{align*}
n^2 {z^{\tht}}' \dtilde^\tgn z^\tht = & y' \diagSmallBracket{c'\onesmat} \dtilde^\tgn \diagSmallBracket{c'\onesmat} y
\\ = & \sum_{r=1}^k c_r^2 y_r' \dmat_{rr} y_r +\sum_{r=1}^{k-1} \sum_{s=r+1}^{k}  c_r c_s \left(y_r' \dmat_{rs} y_r+y_s' \dmat_{sr} y_s\right)
\\ = & \sum_{r=1}^k c_r^2 y_r' \dmat_{rr} y_r +\sum_{r=1}^{k-1} \sum_{s=r+1}^{k}  c_r c_s \left(y_r' \dmat_{12} y_r+y_s' \dmat_{12} y_s\right)
\\ = & \sum_{r=1}^k c_r^2 y_r' \dmat_{rr} y_r +\sum_{r=1}^{k-1} \sum_{s=r+1}^{k}  c_r c_s \left(y_r' \dmat_{12} (y_s+\tau_{rs})+y_s' \dmat_{12} (y_r-\tau_{rs}) \right)
\\ = & n^2 {z^{\tht}}' \dmat z^\tht + \sum_{r=1}^{k-1} \sum_{s=r+1}^{k}  c_r c_s \left(y_r' \dmat_{12}\tau_{rs}- y_s'\dmat_{12} \tau_{rs} \right)
\\ = & n^2 {z^{\tht}}' \dmat z^\tht + \sum_{r=1}^{k-1} \sum_{s=r+1}^{k}  c_r c_s \left(y_r' \dmat_{12}\tau_{rs}- (y_r-\tau_{rs})'\dmat_{12} \tau_{rs} \right)
\\ = & n^2 {z^{\tht}}' \dmat z^\tht + \sum_{r=1}^{k-1} \sum_{s=r+1}^{k}  c_r c_s \tau_{rs}'\dmat_{12} \tau_{rs}.
\end{align*}
Next, to show that the second term is non-negative, note that $\tau_{rs} =\tau_{rk}-\tau_{sk}$, and write
\begin{align*}
    \sum_{r=1}^{k-1} \sum_{s=r+1}^{k}  c_r c_s \tau_{rs}'\dmat_{12} \tau_{rs} = & \frac{1}{2}\sum_{r=1}^{k} \sum_{s=1}^{k}  c_r c_s \tau_{rs}'\dmat_{12} \tau_{rs}  
    \\ = & \frac{1}{2}\sum_{r=1}^{k} \sum_{s=1}^{k}  c_r c_s \left(\tau_{rk}-\tau_{sk}\right)'\dmat_{12} \left(\tau_{rk}-\tau_{sk}\right) \\ = & \frac{1}{2}\sum_{r=1}^{k} \sum_{s=1}^{k}  c_r c_s \left(\tau_{rk}'\dmat_{12} \tau_{rk}+\tau_{sk}'\dmat_{12} \tau_{sk} - 2 \tau_{sk}'\dmat_{12} \tau_{rk}\right)   
    \\ = & \sum_{r=1}^{k} \sum_{s=1}^{k}  c_r c_s \tau_{rk}'\dmat_{12}  \tau_{rk} -\sum_{r=1}^{k} \sum_{s=1}^{k}  c_r c_s  \tau_{sk}'\dmat_{12} \tau_{rk}  
    \\ = & \sum_{r=1}^{k}  c_r  \tau_{rk}'\dmat_{12}  \tau_{rk} \left(\sum_{s=1}^{k} c_s\right) - \left(\sum_{s=1}^{k}  c_s  \tau_{sk}\right)'\dmat_{12} \left(\sum_{r=1}^{k}c_r  \tau_{rk} \right)
    \\ =& 0 - {\tau^*} '\dmat_{12} \tau^* 
    \\ \geq& 0
\end{align*}
where the second to last line uses $\sum_{s=1}^{k} c_s=0$ and the definition ${\tau^*}:=\sum_{s=1}^{k}  c_s  \tau_{sk}$. The last line follows because $\dmat_{12}$ is negative semidefinite.
\end{proof}

\subsection{A novel proof of the Aronow-Samii bound}\label{section.ASbound}

Consider an identified bound proposed by \cite{aronowsamii17} that has the a unusual virtue of being perfectly general, i.e., applicable to arbitrary (identified) designs.

\begin{definition}[Aronow-Samii variance bound]
	The ``Aronow-Samii variance bound" is the bound corresponding to the bounding matrix,
	\begin{align*}
	\tilde{\dmat}^\tas:=\dmat +\bfI\left(\dmat=-1\right)+ \diagSmallBracket{
		\bfI\left(\dmat=-1\right) \ones{kn}
	}
	\end{align*}
	where the indicator function, $\textnormal{I}(\dmat=-1)$, returns a matrix of with ones indicating the location of -1 entries in $\dmat$ and zeros elsewhere, and $\diagSmallBracket{.}$ creates a diagonal matrix from a vector.
\end{definition}
\begin{theorem}
	The Aronow-Samii variance bound, $ n^{-2} y' \tilde{\dmat}^\tas y$,  is an identified bound for $ n^{-2}  y' {\dmat} y$.
\end{theorem}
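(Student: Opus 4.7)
The plan is to verify the two defining conditions separately: first I would show that $\tilde{\dmat}^\tas$ is a bounding matrix for $\dmat$ by checking that $\tilde{\dmat}^\tas - \dmat$ is positive semidefinite (then Lemma \ref{psd} applies), and second I would check the identification condition of Definition \ref{def.identified.bound}, namely that $\tilde{\dmat}^\tas_{ij} = 0$ whenever $\dmat_{ij} = -1$.

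For identification, I would begin by noting that the diagonal entries of $\dmat$ are strictly nonnegative: from Definition \ref{dmat}, a diagonal entry equals $1/\bpi_{ii} - 1 \ge 0$. Hence the 0/1 matrix $\bfI(\dmat=-1)$ has zero diagonal, which means the term $\diag{\bfI(\dmat=-1)\ones{kn}}$ in $\tilde{\dmat}^\tas$ contributes only on the diagonal of $\tilde{\dmat}^\tas$. Consequently, at every off-diagonal entry $(i,j)$ with $\dmat_{ij} = -1$, one computes $\tilde{\dmat}^\tas_{ij} = -1 + 1 + 0 = 0$, as required. Since $-1$ values occur only off-diagonal, identification is complete.

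For the semidefiniteness step, I would set $B := \bfI(\dmat=-1)$. Since $\dmat$ is symmetric, $B$ is a symmetric 0/1 matrix with zero diagonal, i.e., the adjacency matrix of an undirected graph on $kn$ vertices, and $\diag{B\ones{kn}}$ is its degree matrix. Then
\begin{align*}
\tilde{\dmat}^\tas - \dmat \;=\; B + \diag{B\ones{kn}}
\end{align*}
is the \emph{signless Laplacian} of that graph, and for any $z \in \mathds{R}^{kn}$ the standard square-completion identity gives
\begin{align*}
z'\bigl(B + \diag{B\ones{kn}}\bigr)z \;=\; \tfrac{1}{2}\sum_{i=1}^{kn}\sum_{j=1}^{kn} B_{ij}\,(z_i + z_j)^2 \;\ge\; 0,
\end{align*}
so $\tilde{\dmat}^\tas - \dmat$ is positive semidefinite and the bound follows from Lemma \ref{psd}.

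The main obstacle, such as it is, is recognizing the signless-Laplacian structure of the correction matrix; once that observation is made, PSD is a one-line square-completion. The identification step reduces to checking that the $+\bfI(\dmat=-1)$ and $+\diag{\bfI(\dmat=-1)\ones{kn}}$ corrections act on disjoint support (off-diagonal vs.\ diagonal), which is guaranteed by the nonnegativity of $\dmat$'s diagonal entries.
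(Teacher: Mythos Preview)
Your proof is correct and complete. The identification argument is actually more careful than the paper's: you explicitly verify that $\dmat$ has nonnegative diagonal (so $-1$'s occur only off-diagonal), whereas the paper leaves this implicit when it asserts ``by construction'' that the diagonal of $\tilde{\dmat}^\tas-\dmat$ equals the row-sum of its off-diagonal entries.

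The route to positive semidefiniteness is genuinely different. The paper observes that $\tilde{\dmat}^\tas-\dmat$ is weakly diagonally dominant with nonnegative diagonal and invokes the Gershgorin circle theorem. You instead recognize $\tilde{\dmat}^\tas-\dmat = B+\diag{B\ones{kn}}$ as the signless Laplacian of the graph with adjacency matrix $B=\bfI(\dmat=-1)$ and write the quadratic form explicitly as a sum of squares, $\tfrac12\sum_{i,j}B_{ij}(z_i+z_j)^2$. Your argument is self-contained and constructive---it exhibits the nonnegative decomposition directly---while the paper's argument is shorter but relies on a named theorem. Both prove exactly the same fact about exactly the same matrix; neither yields a sharper conclusion than the other. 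The paper notes in a remark that \cite{aronowsamii17} originally used Young's inequality, so your signless-Laplacian route adds a third proof to the record.
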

\begin{proof}
	By definition of $\tilde{\dmat}^\tas$, 
	\begin{align*}
	\tilde{\dmat}^\tas-\dmat= \bfI\left(\dmat=-1\right)+\text{\scriptsize diag}\left(\bfI\left(\dmat=-1\right) 1_{\scriptscriptstyle kn}\right).
	\end{align*}
	Note that by construction ($\tilde{\dmat}^\tas-\dmat$) has diagonal elements set equal to the sum of the off-diagonal elements in its row (which by construction are either 0 or 1). The Gershgorin circle theorem implies that a real matrix is positive semi-definite if, for all $i$, the $i^{th}$ diagonal element is greater or equal to the sum of the absolute values of the other elements in the $i^{th}$ row. So, by the Gershgorin circle theorem $\tilde{\dmat}^\tas-\dmat$ is positive semidefinite. Therefore, by Lemma (\ref{psd}), $\nNegTwo y'\tilde{\dmat}^\tas y$ is a variance bound. Moreover, as long as the design is an identified design (i.e., $0<\pi_{1i}<1$ for all $i$), it is an identified bound because $\bfI\left(\dmat=-1\right)$ ensures that the elements of $\dmat$ equal to $-1$ correspond to 0's in $\tilde{\dmat}^{\tas}$.
\end{proof}

\begin{remark}
    Aronow and Samii (2017) derive their bound using Young's inequality. The above-theorem and proof using the Gershgorin circle theorem tie their insight to the current framework.  
\end{remark}



\subsection{Proposed algorithm for variance bounds for any design}\label{section.Mbound}

The following is an algorithm which that can obtain an identified variance bound. Like the AS bound it has the virtue of being applicable to virtually any design. The algorithm is a proof of concept, demonstrating the utility of the notation scheme which allows for the application of matrix theory for the creation of alternative bounds. The subject of comparing bounds will be considered further in Section \ref{section.comparing.bounds}.

\begin{algorithm}\label{algorithm}
\end{algorithm}
\begin{enumerate}
	\item Initialize $kn \times kn$ matrix $\mathbf{t}$. Examples could be $\bfI(\dmat=-1)$ or, if the conditions for the Neyman bound not be applicable, start with $\tilde{\dmat}^\tgn-\dmat$ which may approximate a bound
	\item Obtain the eigen decomposition of matrix $\mathbf{t}$. If all eigenvalues are non-negative (within tolerance), goto Step 6, otherwise continue
	\item Update $\mathbf{t}=\mathbf{v} (\mathbf{e} \circ \bfI(\mathbf{e}>0)) \mathbf{v}'$ where $\mathbf{v}$ is the matrix of eigenvectors and $\mathbf{e}$ is a diagonal matrix of eigenvalues
	\item Update $\mathbf{t}=\bfI(\dmat=-1)+\bfI(\dmat\neq -1) \circ \mathbf{t}$
	\item Return to Step 2
	\item Set $\tilde{ \dmat }^\tm =\dmat+\mathbf{t}$
\end{enumerate}	
As above, $\circ$ is elementwise multiplication and, for example, $\bfI(\mathbf{e}>0)$ is an indicator function returning a matrix of ones and zeros indicating which elements of $\mathbf{e}$ are greater than zero.

Conceptually, the goal of the algorithm is to create a matrix $\mathbf{t}$ that can be added to $\dmat$ yielding a $\tilde{ \dmat }$ matrix that corresponds to an identified variance bound. By Lemma \ref{psd} and Definition \ref{def.identified.bound}, there are two requirements for $\mathbf{t}$. First it must be positive semi-definite, and, second, elements corresponding to $-1$'s in the matrix $\dmat$ must equal one. In step 1, $\mathbf{t}$ meets the second criterion, but not the first.  In step 3, the algorithm creates an approximation to the initial $\mathbf{t}$ matrix by way of the eigen decomposition that ensures positive semi-definiteness, thus meeting the first criterion. However, due to the approximation, $\mathbf{t}$ no longer meets the second criterion. Therefore, in step 4 the algorithm forces $\mathbf{t}$ to have 1's wherever $\dmat$ has $-1$'s in order to again meet the second criteria. But doing so means that $\mathbf{t}$ will no longer meet the first criteria. So, the algorithm iterates through steps 2-4 until convergence is achieved (i.e., until all eigenvalues are non-negative in step 2) at which point $\mathbf{t}$ meets both criteria and, thus, $\tilde{ \dmat }^\tm$ corresponds to an identified bound.

\subsection{Comparing bounds}\label{section.comparing.bounds}

\begin{definition}[Tighter bound]
	Let $\tilde{\dmat}^a$ and $\tilde{\dmat}^b$ correspond to two identified bounds. Matrix $\tilde{\dmat}^a$ is corresponds to a \textit{tighter bound} than $\tilde{\dmat}^b$ if $\tilde{ \dmat }^b - \tilde{ \dmat }^a$ is positive semidefinite. 
\end{definition}

\begin{definition}[Invariant bounding matrix]\label{def.invariant.bound}
	A matrix $\dtilde$ is an \textit{invariant bounding matrix} if it is an bounding matrix and if all $n \times n$ partitions, $\dtilde_{ij} \ones{kn} =  0_{\scriptscriptstyle kn}$, i.e., all rows of the partition (or, equivalently, all columns) sum to zero.
\end{definition}

\exampleType{Paired randomization}  Consider a pair-randomized design, whereby units are ``blocked" (i.e., stratified) into groups of two, and then, in each block, one of the two units is randomly assigned to treatment while the other is assigned to control. Assignments across blocks are independent. 

When $n=4$ (and assuming w.l.o.g. that the data are sorted by pair), the matrix $\dmat$ is
\begin{align*}
\dmat = \left[\begin{smallmatrix}
 \hspace{2.2mm}1 & -1 & & & -1 & \hspace{2.2mm}1 &  
\\ -1 & \hspace{2.2mm}1 & & & \hspace{2.2mm}1 & -1
\\ & & \hspace{2.2mm}1 & -1 & & & -1 & \hspace{2.2mm}1
\\ & & -1 & \hspace{2.2mm}1 & & & \hspace{2.2mm}1 & -1
\\ -1 & \hspace{2.2mm}1 & & & \hspace{2.2mm}1 & -1
\\ \hspace{2.2mm}1 & -1 & & & -1 & \hspace{2.2mm}1 & 
\\ & & -1 & \hspace{2.2mm}1 & & & \hspace{2.2mm}1 & -1
\\ & & \hspace{2.2mm}1 & -1 & & & -1 & \hspace{2.2mm}1 & 
\end{smallmatrix}\right],
\end{align*}
noting that empty cells represent 0. 

For the pair-randomized design, the Neyman bound cannot be applied because $\dmat_{00}$ and $\dmat_{11}$ have negative entries. The Aronow-Samii bound and Algorithm \ref{algorithm} have bounding matrices
\begin{center}
$\dtilde^\tas = \left[\begin{smallmatrix}
\hspace{2.2mm}3 &  & & &  & \hspace{2.2mm}1 &  
\\  & \hspace{2.2mm}3 & & & \hspace{2.2mm}1 & 
\\ & & \hspace{2.2mm}3 & & & & & \hspace{2.2mm}1
\\ & &  & \hspace{2.2mm}3 & & & \hspace{2.2mm}1 & 
\\  & \hspace{2.2mm}1 & & & \hspace{2.2mm}3 & 
\\ \hspace{2.2mm}1 &  & & &  & \hspace{2.2mm}3 & 
\\ & &  & \hspace{2.2mm}1 & & & \hspace{2.2mm}3 & 
\\ & & \hspace{2.2mm}1 & & & &  & \hspace{2.2mm}3 & 
\end{smallmatrix}\right]$, \hspace{2mm} and \hspace{2mm} 
$\dtilde^\tm = {\left[\begin{smallmatrix}
\hspace{2.2mm}2 &  & & &  & \hspace{2.2mm}2 &  
\\  & \hspace{2.2mm}2 & & & \hspace{2.2mm}2 & 
\\ & & \hspace{2.2mm}2 & & & & & \hspace{2.2mm}2
\\ & &  & \hspace{2.2mm}2 & & & \hspace{2.2mm}2 & 
\\  & \hspace{2.2mm}2 & & & \hspace{2.2mm}2 & 
\\ \hspace{2.2mm}2 &  & & &  & \hspace{2.2mm}2 & 
\\ & &  & \hspace{2.2mm}2 & & & \hspace{2.2mm}2 & 
\\ & & \hspace{2.2mm}2 & & & &  & \hspace{2.2mm}2 & 
\end{smallmatrix}\right]}$,
\end{center}
\noindent respectively.  By the Gershgorian circle theorem the difference,
\begin{align*}
\dtilde^\tas - \dtilde^\tm= \left[\begin{smallmatrix}
\hspace{2.2mm}1 &  & & &  & -1 &  
\\  & \hspace{2.2mm}1 & & & -1 & 
\\ & & \hspace{2.2mm}1 & & & & & -1
\\ & &  & \hspace{2.2mm}1 & & & -1 & 
\\  & -1 & & & \hspace{2.2mm}1 & 
\\ -1 &  & & &  & \hspace{2.2mm}1 & 
\\ & &  & -1 & & & \hspace{2.2mm}1 & 
\\ & & -1 & & & &  & \hspace{2.2mm}1 & 
\end{smallmatrix}\right],
\end{align*}
is positive semi-definite, proving that $\dtilde^\tm$ corresponds to a tighter variance bound. Confirmation also comes from eigendecomposition of the difference, $\dtilde^\tas - \dtilde^\tm$, which yields all non-negative eigenvalues: 2, 2, 2, 2, 0, 0, 0, and 0. 

One might alternatively choose the invariant bounding matrix,
\begin{align*}
\dtilde^{\tinv} = {\left[\begin{smallmatrix}
	\hspace{1mm}2 &  & -1 &-1 &  & \hspace{1mm}2 &-1 & -1  
	\\  & \hspace{1mm}2 & -1&-1 & \hspace{1mm}2 & &-1 & -1 
	\\ -1 & -1 & \hspace{1mm}2 & & -1 & -1 & & \hspace{1mm}2
	\\ -1 & -1 &  & \hspace{1mm}2 & -1& -1& \hspace{1mm}2 & 
	\\  & \hspace{1mm}2 &-1 & -1 &  \hspace{1mm}2 & &-1 & -1  
	\\ \hspace{1mm}2 &  & -1 & -1 &  & \hspace{1mm} 2  &-1 & -1 
	\\ -1 & -1 &  & \hspace{1mm}2 & -1 & -1 & \hspace{1.2mm}2 & 
	\\ -1 & -1 & \hspace{1.2mm}2 & & -1  & -1  &  & \hspace{1.2mm}2 & 
	\end{smallmatrix}\right]}.
\end{align*}
The bound can be verified because the eigenvalues of $\dtilde^\tinv-\dmat$ are 8, 0, 0, 0, 0, 0, 0, and 0. However, eigendecomposition of $\dtilde^\tinv-\dtilde^\tm$ gives eigenvalues 4, 0, 0, 0, 0, 0, 0, and -4, indicating that the better bound may depend on outcome vector, $y$, and perhaps the estimator as well.
{ \hfill $\triangle$}

\section{Variance bound estimation}\label{section.EstimatingBound}

With an identified variance bounds defined and several methods of obtaining matrices, $\tilde{\dmat}$, this section turns to the subject of variance bound \textit{estimation}. 

First define the $kn \times kn$ matrix of probabilities and joint probabilities of assignment, 
\begin{align}
{\matp} := \E \left[\R 1_{\scriptscriptstyle kn} 1'_{\scriptscriptstyle kn} \R \right] \nonumber.
\end{align}
Next define an inverse probability weighted version of bounding matrix, $\tilde{ \dmat }$, as
\begin{align} 
\dtildep  := \tilde{ \dmat } / \matp
\end{align}
with $/$ denoting element-wise division defined such that division by zero equals zero.  Then an unbiased estimator of a variance bound for the Horvitz-Thompson estimator can be written,
\begin{align}\label{var_ht_est}
\widehat{\tilde{\V}}\left(\widehat{\delta}^{\tht} \right) :=  {z_c^\tht}' \R \dtildep \R z_c^\tht,
\end{align}
with $z_c^\tht := \diagSmallBracket{y}\onesmat \left(\onesmat' \onesmat \right)^{-1} c$. 
It is unbiased for the variance bound ${z_c^\tht}' \dtilde z_c^\tht$ because $\E \left[\R \dtildep \R \right]=\tilde{ \dmat }$ by construction.  Being inverse-probability weighted, the variance bound estimator in (\ref{var_ht_est}) is, itself, a Horvitz-Thompson estimator.

For other linear estimators, examples of which are given in Table \ref{table.examples.linear.ests}, the bound $z_c' \tilde{ \dmat }z_c$ cannot be estimated unbiasedly because the definition of $z_c$ will often include quantities that, themselves, must be estimated.  However, an appeal to the plug-in principle suggests the use of
\begin{align}\label{var_lin_est}
\widehat{\tilde{\V}}\left(\widehat{\delta}^{\tl }_c \right) := \widehat{z}_c' \R \dtildep \R \widehat{z}_c
\end{align}
with $\widehat{z}_c$ having the same form as $z_c$ but with sample analogues replacing some components. 

\exampleType{The special case of Eicker-Huber-White (a.k.a. ``heteroskedastic consistent", ``sandwich", and ``robust") standard errors} For the OLS estimator, $z^\tols_c$ is defined in Table (\ref{table.examples.linear.ests}). The plug-in principle motivates the use of
\begin{align*}
	\R \widehat{z}^\tols_c = \bpi  \diag{\R  \widehat{u}} \xx \left( \xx' \R \xx\right)^{-1} c,
\end{align*}
where $\R \widehat{u}:=\R (y-\xx \widehat{b}^{\tols} )$ and $\widehat{b}^{\tols}:=\left( \xx' \R \xx\right)^{-1} \xx' \R y$ is the OLS coefficient. Then from equation (\ref{var_lin_est}) we have,
\begin{align*}
\widehat{\tilde{\V}}\left(\widehat{\delta}^{\tl(\tols) }_c \right) 
	= & c' \left(\xx' \R \xx \right)^{-1} \xx' \diag{\R \widehat{u}} \bpi \dtildep \bpi     \diag{\R  \widehat{u}}  \xx \left(\xx' \R \xx \right)^{-1}\hspace{-1mm}c.
\end{align*}
This is a the variance bound estimator in (\ref{var_lin_est}) made specific to OLS. So far it is applicable to virtually any design and any variance bound.

Next, specify a Bernoulli design, in which units are assigned independently to treatment. (Probabilities of assignment may be equal across units, but they need not be in this example.) In this design, the diagonal elements of $\dmat$ are equal to the diagonal of $\bpi^{-1} -\I_{\scriptscriptstyle kn}$, where $\I_{\scriptscriptstyle kn}$ is an identity matrix. Further, any of the above bounding methods yields $\tilde{\dmat}=\bpi^{-1} -\I + \I=\bpi^{-1}$. Thus $\dtildep=\bpi^{-2}$ so that $\bpi \dtildep \bpi=\I_{\scriptscriptstyle kn}$ is the identity matrix. So the OLS variance bound estimator for Bernoulli designs simplifies to,
\begin{align*}
\widehat{\tilde{\V}}{}^{\tb}\left(\widehat{\delta}_c^{\tl(\tols) } \right) = & c' \left(\xx' \R \xx \right)^{-1}\xx' \diag{\R \widehat{u}^2}  \xx \left(\xx' \R \xx \right)^{-1} c.
\end{align*}
This is White's (1980) canonical ``sandwich" variance estimator, sometimes referred to as HC0. { \hfill $\triangle$}

\begin{remark}
    The example shows that Eicker-Huber-White standard errors are a special case of (\ref{var_lin_est}) for OLS in a Bernoulli design.  Note, however, that (\ref{var_lin_est}) is much more general. It applies to any linear estimator, virtually any design and any (identified) variance bound.
\end{remark} 

\begin{remark}
	Adjustments for degrees of freedom (e.g., HC1) or leverage (e.g., HC2, HC3, etc.) can be applied as well.  
\end{remark}

\exampleType{The special case of ``cluster robust" standard errors} Also consider this variance bound estimator for OLS in designs in which clusters are assigned independently to treatment.  Then, if we choose the Neyman bound $\dtildep^\tgn$ (or $\dtildep^\tm$, which is equivalent in the case of for Bernoulli assignment of clusters), and assuming w.l.o.g. that units are sorted by cluster, then $\bpi \dtildep^{\tgn} \bpi $ resolves to a block diagonal matrix of 1's with the blocks corresponding to clusters. Hence, (\ref{var_lin_est}) also reproduces the ``cluster-robust" standard errors sometimes referred to as CR0 as a special case.{ \hfill $\triangle$}

\section{Asymptotics}

\subsection{Conditions for Convergence of Horvitz-Thompson Estimators}

First establishing the unbiasedness of Horvitz-Thompson estimators will allow for straightforward proofs of consistency.

\begin{lemma}\label{lemma.HT.unbiased}
The Horvitz-Thompson estimator for an outcome vector, $y$, and given contrast, $c$, is unbiased for $\delta_c$.
\end{lemma}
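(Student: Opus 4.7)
The plan is to prove unbiasedness by a direct expectation calculation, exploiting the fact that the Horvitz-Thompson weights $\W^{\tht} = (\onesmat'\onesmat)^{-1}\onesmat' \bpiInv$ are entirely nonrandom. Because randomness in $\widehat{\delta}^{\tht}_c = c'\W^{\tht}\R y$ enters only through $\R$, and because the identity $y = \diag{y}\ones{kn}$ lets us freely commute $y$ past $\R$ in diagonal-matrix form, linearity of expectation plus $\E[\R] = \bpi$ should collapse everything cleanly.

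Concretely, I would first recall that the estimand associated with contrast $c$ is $\delta_c := c'(\onesmat'\onesmat)^{-1}\onesmat' y$, i.e.\ the weighted contrast of arm-wise potential-outcome means (matching the running examples in Section 1, where $c=(-1,1)'$ yields the ATE $n^{-1}\sum_i(y_{2i}-y_{1i})$). Then I would write
\begin{align*}
\E\!\left[\widehat{\delta}^{\tht}_c\right]
&= \E\!\left[c'(\onesmat'\onesmat)^{-1}\onesmat'\bpiInv \R y\right]
 = c'(\onesmat'\onesmat)^{-1}\onesmat'\bpiInv \,\E[\R]\, y,
\end{align*}
since $c$, $\onesmat$, $\bpi$, and $y$ are all nonrandom under the NCM. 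Substituting $\E[\R] = \bpi$ gives $\bpiInv \bpi = \I_{kn}$, so the expression collapses to $c'(\onesmat'\onesmat)^{-1}\onesmat' y = \delta_c$.

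There is essentially no obstacle here beyond pinning down the definition of the target $\delta_c$; the argument is a one-line application of linearity and the cancellation $\bpiInv \bpi = \I_{kn}$. The only subtlety worth flagging explicitly is that this manipulation requires $\bpi$ to be invertible, i.e.\ the design must be identified in the sense that every unit has strictly positive probability of assignment to every arm (equivalently, no diagonal entry of $\bpi$ is zero). This is the same identification condition invoked at the end of the Aronow-Samii bound proof, so I would note it in a one-line remark rather than elaborating. No Taylor approximation, no variance-matrix machinery, and no appeal to $\dmat$ is needed, since the HT estimator has deterministic weights and the result is exact rather than asymptotic.
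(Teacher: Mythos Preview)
Your proposal is correct and is essentially identical to the paper's own proof: both pull the nonrandom quantities out of the expectation, use $\E[\R]=\bpi$, cancel $\bpiInv\bpi$, and read off $\delta_c = c'(\onesmat'\onesmat)^{-1}\onesmat' y$. Your additional remark on the implicit requirement that $\bpi$ be invertible is a reasonable clarification but not part of the paper's argument.
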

\begin{proof}
    \begin{align*}
        \E \left[ c' \left(\onesmat' \onesmat \right)^{-1} \onesmat' \bpiInv \R y \right]= & c' \left(\onesmat' \onesmat \right)^{-1} \onesmat' \bpiInv \E \left[ \R \right]y 
        \\ = & c' \left(\onesmat' \onesmat \right)^{-1} \onesmat' \bpiInv \bpi y
        \\ = &c' \left(\onesmat' \onesmat \right)^{-1} \onesmat'y
        \\ = & \delta_c,
    \end{align*}
\end{proof}

\begin{condition}[Bounded contrast]\label{condition.bound.c}
The chosen contrast vector is finite, i.e., there exists a finite value $u_c$ such that $\textnormal{max}(|c|)< u_c$.
\end{condition}

\begin{condition}[Bounded outcomes]\label{condition.bound.y}
There exists a finite value, $u_y$, such that $\textnormal{max}(|y|)<u_y$ for all $n$.
\end{condition}

\begin{condition}[Design constraint for consistent Horvitz-Thompson estimators]\label{condition.bound.dmat}
There exists a finite value, $u_\dmat$, such that $n^{-1} ||\dmat||_{{1,1}} < u_\dmat$ for all $n$, where $||.||_{1,1}$ is the matrix norm that sums the absolute values of the matrix entries.
\end{condition}

\begin{theorem}[Root-n consistency of HT estimators]\label{theorem.HT.consistency}
By Lemma \ref{lemma.HT.unbiased} and Conditions \ref{condition.bound.c}-\ref{condition.bound.dmat} the Horvitz-Thompson estimator is root-n consistent.    
\end{theorem}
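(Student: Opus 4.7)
The plan is to combine the unbiasedness from Lemma \ref{lemma.HT.unbiased} with the closed-form variance from Theorem \ref{theorem.HTvar}, bound that variance by $O(n^{-1})$ using Conditions \ref{condition.bound.c}--\ref{condition.bound.dmat}, and then invoke Chebyshev's inequality. So root-$n$ consistency amounts to showing $\V(\widehat{\delta}^{\tht}_c) = O(n^{-1})$.

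First I would use Theorem \ref{theorem.HTvar} to write $\V(\widehat{\delta}^{\tht}_c) = {z_c^{\tht}}' \dmat\, z_c^{\tht}$ with ${z_c^{\tht}}' = c'(\onesmat'\onesmat)^{-1}\onesmat'\diag{y}$. Since $\onesmat'\onesmat = n\,\I_{k}$, the $i$-th entry of $z_c^{\tht}$ (in the $r$-th block of size $n$) is exactly $c_r y_{ri}/n$. By Conditions \ref{condition.bound.c} and \ref{condition.bound.y}, every coordinate satisfies
\begin{align*}
\bigl|(z_c^{\tht})_j\bigr| \;\leq\; \frac{u_c u_y}{n},
\end{align*}
so $\|z_c^{\tht}\|_\infty \leq u_c u_y / n$.

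Next I would apply the elementary bound $|z'\dmat z| \leq \|z\|_\infty^2 \|\dmat\|_{1,1}$, which follows because $|z'\dmat z| \leq \sum_{i,j}|z_i||z_j||\dmat_{ij}| \leq \|z\|_\infty^2 \sum_{i,j}|\dmat_{ij}| = \|z\|_\infty^2 \|\dmat\|_{1,1}$. Combined with Condition \ref{condition.bound.dmat}, this gives
\begin{align*}
\V\bigl(\widehat{\delta}^{\tht}_c\bigr) \;\leq\; \frac{u_c^2 u_y^2}{n^2}\,\|\dmat\|_{1,1} \;\leq\; \frac{u_c^2 u_y^2\, u_{\dmat}}{n}.
\end{align*}

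Finally, using unbiasedness from Lemma \ref{lemma.HT.unbiased}, $\E\bigl[(\widehat{\delta}^{\tht}_c - \delta_c)^2\bigr] = \V(\widehat{\delta}^{\tht}_c) \leq u_c^2 u_y^2 u_{\dmat}/n$. Chebyshev's inequality then yields $\widehat{\delta}^{\tht}_c - \delta_c = O_p(n^{-1/2})$, which is root-$n$ consistency. There is no real obstacle here: the work has been front-loaded into Theorem \ref{theorem.HTvar} and the conditions; the only mildly delicate step is verifying the matrix-norm inequality $|z'\dmat z|\leq \|z\|_\infty^2\|\dmat\|_{1,1}$, which is a one-line consequence of the triangle inequality.
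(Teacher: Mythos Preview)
Your proof is correct and follows essentially the same route as the paper: bound the entries of $z_c^{\tht}$ by $u_c u_y/n$, apply the elementary inequality $|z'\dmat z|\leq \|z\|_\infty^2\|\dmat\|_{1,1}$ (which the paper attributes to H\"older), and use Condition~\ref{condition.bound.dmat} to get $n\V(\widehat{\delta}_c^{\tht})\leq u_c^2 u_y^2 u_\dmat$. Your version is in fact slightly more explicit than the paper's, spelling out the entries of $z_c^{\tht}$ and the Chebyshev step.
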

\begin{proof}
    Given Lemma \ref{lemma.HT.unbiased}, it is sufficient to show that the variance converges at the parametric rate, i.e., that $n \V(\delta_c^\tht)$ is bounded, in order to prove consistency.  By Holder's Inequality,
    \begin{align*}
        n \V(\delta_c^\tht) & \leq  n \hspace{1mm}\textnormal{max}(|c'\left(\onesmat' \onesmat \right)^{-1}\onesmat'y|)^2  ||\dmat||_{{1,1}}
        \\ & \leq \textnormal{max}(|c|)^2 \textnormal{max}(|y|)^2 n^{-1} ||\dmat||_{{1,1}}
        \\ & \leq u_c^2 u_y^2 u_\dmat ,
    \end{align*}
    with the last line using Conditions \ref{condition.bound.c}-\ref{condition.bound.dmat}.
\end{proof}

\exampleType{Checking consistency of HT estimators for completely randomized experiments}   Consider a completely randomized experiment with $n$ units where a fixed number of units, $n_c$, are randomly assigned to control, and the remainder, $n_t= n - n_c$, are assigned to treatment. Assume an asymptotic sequence of designs is such that there exists a constant value, $\pi$, such that $\frac{n_t}{n} \rightarrow \pi$ as $n \rightarrow \infty$ with $0<\pi_t<1$. Partition $\dmat$ into four $(n \times n)$ matrices and let $\dmat_{ab}$ represent the $a,b \in \{1,2\}$ partition. Each partition has elements which take on two possible values, one on the diagonal and another on the off-diagonal.  Because $\dmat_{12}=\dmat_{21}$, entries of matrix $\dmat$ take on one of six possible values. In Table \ref{table.d.completerand}, analysis of the these six values and their corresponding frequencies shows that a completely randomized design yields $\frac{1}{n}||\dmat||_{{1,1}}=2\left(\frac{n_t}{n_c}+\frac{n_c}{n_t}+2\right)=O(1)$. Thus, Condition \ref{condition.bound.dmat} is satisfied. Therefore, by Theorem \ref{theorem.HT.consistency}, Horvitz-Thompson estimators are consistent for completely randomized experiments for bounded contrast, $c$, and outcome vector, $y$. { \hfill $\triangle$}

\begin{table}
    \centering
    \begin{tabular}{|c|c|c|c|c|}\hline
     partition  & $ij$ pattern & count &  $\scriptstyle \{\dmat_{ab}\}_{ij}$ & count$\scriptstyle \times \frac{1}{n} \{\dmat_{ab}\}_{ij} $   \\
     \hline $\dmat_{11}$   & $\scriptstyle i=j$ & $\scriptstyle n$ & $\scriptstyle \frac{n_t}{n_c}$   & $\scriptstyle \frac{n_t}{n_c}=  O(1)$
                 \\        & $\scriptstyle i \neq j$ & $\scriptstyle n(n-1)$ & $\scriptstyle -\frac{n_t}{n_c(n-1)}$   & $\scriptstyle -\frac{n_t}{n_c}=  O(1)$
    \\ \hline $\dmat_{12}$ or $\dmat_{21}$   & $\scriptstyle  i=j$ & $\scriptstyle  2n$ & $\scriptstyle  -1$   & $\scriptstyle  -2=O(1)$
                 \\        & $\scriptstyle i \neq j$ & $\scriptstyle 2n(n-1)$ & $\scriptstyle \frac{1}{(n-1)}$   & $\scriptstyle  2=O(1)$
     \\ \hline $\dmat_{22}$   & $\scriptstyle i=j$ & $\scriptstyle n$ & $\scriptstyle \frac{n_c}{n_t}$   & $\scriptstyle \frac{n_c}{n_t}=\scriptstyle O(1)$
                 \\        & $\scriptstyle i \neq j$ & $\scriptstyle n(n-1)$ & $\scriptstyle -\frac{n_c}{n_t(n-1)}$   & $\scriptstyle -\frac{n_c}{n_t}= O(1)$ 
                 \\ \hline
    \end{tabular}
    \caption{Analysis of Condition \ref{condition.bound.dmat} for Complete Randomization}
    \label{table.d.completerand}
\end{table} 

\subsection{Conditions for Convergence of WLS estimator class}

\begin{condition}[Bounded covariates]\label{condition.bound.x}
There exists a finite value $u_\x$ that bounds the covariate values, i.e., $\textnormal{max}(|\x |)<u_\x $, for all $n$.
\end{condition}

\begin{condition}[Bounded $\bpi \m$]\label{condition.bound.m}
There exists a finite value $u_{\bpi\m}$ that bounds $\bpi$ times the WLS ``weighting" matrix $\m$, i.e., $\textnormal{max}(| \bpi \m |)<u_{\bpi\m} $, for all $n$.
\end{condition}

\begin{lemma}[Root-n consistency of WLS]\label{lemma.WLS.consistency}
By conditions \ref{condition.bound.y}-\ref{condition.bound.m} and Theorem \ref{theorem.HT.consistency}, the WLS ``numerator'' vector, $\frac{1}{n} \xx' \m \R y$, is root-n consistent for $\frac{1}{n} \xx' \m \bpi y$. Likewise, by conditions \ref{condition.bound.dmat}-\ref{condition.bound.m} and Theorem \ref{theorem.HT.consistency}, the WLS ``denominator'' matrix, $\frac{1}{n} \xx' \m \R \xx$, is root-n consistent for $\frac{1}{n} \xx' \m \bpi \xx$.  Further, by the continuous mapping theorem $\widehat{b}^{\twls} \rightarrow b^\twls$.
\end{lemma}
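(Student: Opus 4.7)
The plan is to reduce both convergence claims to Theorem \ref{theorem.HT.consistency} by recognizing each scalar entry of the WLS numerator and denominator as a Horvitz--Thompson estimator of a suitably transformed outcome, and then to invoke the continuous mapping theorem for $\widehat{b}^\twls$.

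For the numerator, I would first observe that, with $\m$ diagonal (the standard WLS form), the $j$-th entry can be rewritten as
\begin{align*}
\left(\tfrac{1}{n}\xx'\m\R y\right)_j
\;=\; \tfrac{1}{n}\sum_i \xx_{ij}\,(m\pi)_i\,y_i\cdot \pi_i^{-1} R_i
\;=\; \ones{k}'(\onesmat'\onesmat)^{-1}\onesmat'\bpiInv\R\,\tilde{y}^{(j)},
\end{align*}
where $\tilde{y}^{(j)}_i := \xx_{ij}(m\pi)_i y_i$. This is precisely the Horvitz--Thompson form of Lemma \ref{lemma.HT.unbiased} with contrast $c=\ones{k}$ and outcome $\tilde{y}^{(j)}$, so it is unbiased for $\frac{1}{n}(\xx'\m\bpi y)_j$. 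Conditions \ref{condition.bound.y}, \ref{condition.bound.x}, and \ref{condition.bound.m} together yield the uniform bound $\max_i|\tilde{y}^{(j)}_i|\leq u_y\,u_\x\,u_{\bpi\m}$, and Condition \ref{condition.bound.dmat} supplies the design bound; Theorem \ref{theorem.HT.consistency} then delivers root-$n$ consistency of each entry. Since $\xx$ has fixed width $k+l$, the entire numerator vector inherits root-$n$ consistency.

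The denominator is handled by the same reduction with $\xx_{ik}$ playing the role of $y_i$ in each transformed outcome; Condition \ref{condition.bound.x} now controls both factors (and Condition \ref{condition.bound.y} is not needed, matching the stated hypothesis list). Finally, since the numerator and denominator converge to fixed limits and the limit matrix $\frac{1}{n}\xx'\m\bpi\xx$ is assumed invertible (the implicit WLS identifiability condition), the continuous mapping theorem applied to $\widehat{b}^\twls = (\xx'\m\R\xx)^{-1}(\xx'\m\R y)$ yields $\widehat{b}^\twls \to b^\twls$. The main step requiring care is establishing the uniform bound on $\tilde{y}^{(j)}$: it relies on the diagonal structure of $\m$ so that $(\xx'\m\bpi)_{ji}$ collapses to $\xx_{ij}(m\pi)_i$ and can be controlled by Conditions \ref{condition.bound.x} and \ref{condition.bound.m} in tandem; without diagonality, a more delicate matrix-norm bound would be needed.
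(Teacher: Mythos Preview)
Your proposal is correct and follows essentially the same route as the paper: each scalar entry of the WLS numerator (resp.\ denominator) is rewritten as a Horvitz--Thompson estimator with contrast $\ones{k}$ and transformed outcome $\bpi\m\diag{\xx_j}y$ (resp.\ $\bpi\m\diag{\xx_j}\xx_k$), whose sup-norm is then bounded via Conditions \ref{condition.bound.y}--\ref{condition.bound.m} so that Theorem \ref{theorem.HT.consistency} applies. You are in fact slightly more explicit than the paper in flagging that the element-wise bound on the transformed outcome hinges on $\m$ being diagonal; the paper uses this implicitly when commuting $\diag{\xx_i}$ past $\m$.
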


\begin{proof}
       Let $\xx_i$ be the column vector created from the $i^{th}$ column of $\xx$. Then the $i^{the}$ element of $\frac{1}{n} \xx \m \R y $ can be written,    
    \begin{align*}
        \{ \frac{1}{n} \xx \m \R y \}_{i } =& \frac{1}{n} \xx_i' \m \R y
        \\ = & \frac{1}{n} \ones{kn}' \diag{\xx_i} \m \R y
        \\ = & \frac{1}{n} \ones{kn}' \R \m \diag{\xx_i}  y
        \\ = & \ones{k}' \w^{\tht} \R  \bpi \m \diag{\xx_i} y
        \\ = & \ones{k}' \w^{\tht} \R  q
    \end{align*}
    with $q=\bpi \m \diag{\xx_i} y $, showing that the elements of the denominator matrix are Horvitz-Thompson estimators with outcome vector $q$ and contrast vector $\ones{k}$. Now by Theorem \ref{lemma.HT.unbiased}, this is consistent because $q$ is bounded, i.e., $\textnormal{max}(|q|) \leq u_{\bpi \m} u_\x u_y $.
    
Similarly, the $i,j$ element of the WLS ``denominator'' matrix, can be written
    \begin{align*}
        \{ \frac{1}{n} \xx \m \R \xx \}_{ij} = & \ones{k}' \w^{\tht} \R  r
    \end{align*}
    with $r=\bpi \m \diag{\xx_i}  \xx_j  $, showing that the elements of the denominator matrix are Horvitz-Thompson estimators with outcome vector $r$ and contrast vector $\ones{k}$.  Now by Theorem \ref{lemma.HT.unbiased}, this is consistent because $r$ is bounded, i.e., $\textnormal{max}(|r|) \leq u_{\bpi \m} u_\x^2 $.  
\end{proof}

\begin{condition}[Stability of WLS ``denominator" estimand]\label{condition.denom.stable}
The denominator of the ``true" WLS coefficient, $\frac{1}{n} \xx' \m \pi \xx$, is invertable for all $n$ and converges in probability to a matrix, $\mathbf{v}$, with finite entries.
\end{condition}

\begin{theorem}[Consistency of the Taylor approximation]
By Conditions \ref{condition.bound.y}-\ref{condition.denom.stable}, the Taylor approximate coefficient, $ \widehat{b}^{\tl(\twls)}:= b^{\twls} + \left(\xx' \m \bpi \xx \right)^{-1}\xx' \m \R (y-\xx b^{\twls})$ is root-n consistent for ${b}^{\twls}:= \left(\xx' \m \bpi \xx \right)^{-1}\xx' \m \bpi y$, i.e., $ \widehat{b}^{\tl(\twls)}- b^{\twls} = O_p(1/\sqrt{n})$.
\end{theorem}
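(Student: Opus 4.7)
The plan is to isolate the deviation $\widehat{b}^{\tl(\twls)} - b^{\twls}$, observe that it is mean-zero under the randomization, rescale by $\sqrt{n}$, and then control the resulting factors separately: the ``denominator'' via Condition~\ref{condition.denom.stable} and the ``numerator'' via the root-$n$ consistency of Horvitz-Thompson estimators established in Theorem~\ref{theorem.HT.consistency}.

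First I would write $\widehat{b}^{\tl(\twls)} - b^{\twls} = (\xx'\m\bpi\xx)^{-1}\xx'\m\R u$ with $u := y - \xx b^{\twls}$, and note that $\xx'\m\bpi u = \xx'\m\bpi y - \xx'\m\bpi\xx b^{\twls} = 0$ by the defining identity of $b^{\twls}$, so that $\E[\xx'\m\R u] = \xx'\m\bpi u = 0$. Rescaling,
\[
\sqrt{n}\bigl(\widehat{b}^{\tl(\twls)} - b^{\twls}\bigr) = \Bigl(\tfrac{1}{n}\xx'\m\bpi\xx\Bigr)^{-1} \cdot \tfrac{1}{\sqrt{n}}\,\xx'\m\R u.
\]
By Condition~\ref{condition.denom.stable} and the continuous mapping theorem, the first factor converges to $\mathbf{v}^{-1}$ which has finite entries and is therefore $O(1)$.

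It then suffices to show each entry of $\tfrac{1}{\sqrt{n}}\xx'\m\R u$ is $O_p(1)$. Following the Horvitz-Thompson reduction already used in Lemma~\ref{lemma.WLS.consistency}, the $i$th entry satisfies
\[
\bigl\{\tfrac{1}{n}\xx'\m\R u\bigr\}_i = \ones{k}'\,\w^{\tht}\,\R\, q_i, \qquad q_i := \bpi\,\m\,\diag{\xx_i}\,u,
\]
so it is a mean-zero Horvitz-Thompson estimator with ``outcome vector'' $q_i$. Theorem~\ref{theorem.HT.consistency} then delivers $\tfrac{1}{\sqrt{n}}\{\xx'\m\R u\}_i = O_p(1)$, provided $q_i$ is uniformly bounded in $n$. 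Combining with the $O(1)$ factor and Slutsky's theorem yields the claim.

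The main obstacle is verifying uniform boundedness of $q_i$, and in particular of $b^{\twls}$ itself. The bounds on $\bpi\m$, $\xx$ and $y$ come directly from Conditions~\ref{condition.bound.y}, \ref{condition.bound.x}, and \ref{condition.bound.m}. For $b^{\twls}$ I would use the representation $b^{\twls} = (\tfrac{1}{n}\xx'\m\bpi\xx)^{-1}\tfrac{1}{n}\xx'\m\bpi y$: the inverse is bounded by Condition~\ref{condition.denom.stable} and the second factor is bounded by the same entry-bound conditions. The design hypothesis $n^{-1}\|\dmat\|_{1,1} < u_\dmat$ required to invoke Theorem~\ref{theorem.HT.consistency} is imposed directly on $\dmat$ and does not depend on $q_i$, so it needs no further verification at this step.
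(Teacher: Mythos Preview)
Your argument is correct and follows essentially the same route as the paper: factor out $\bigl(\tfrac{1}{n}\xx'\m\bpi\xx\bigr)^{-1}$ and control it via Condition~\ref{condition.denom.stable}, then show the remaining ``numerator'' is $O_p(1/\sqrt{n})$ by reducing to Horvitz--Thompson root-$n$ consistency. The only cosmetic difference is in the numerator step: the paper expands $\xx'\m\R(y-\xx b^{\twls})$ as $\bigl(\tfrac{1}{n}\xx'\m\R y-\tfrac{1}{n}\xx'\m\bpi y\bigr)-\bigl(\tfrac{1}{n}\xx'\m\R\xx-\tfrac{1}{n}\xx'\m\bpi\xx\bigr)b^{\twls}$ and invokes Lemma~\ref{lemma.WLS.consistency} for each piece, whereas you keep the residual $u$ intact and apply Theorem~\ref{theorem.HT.consistency} once with outcome vector $q_i=\bpi\m\,\diag{\xx_i}u$. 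Your version is slightly more direct but requires the explicit check that $u$ (hence $b^{\twls}$) is uniformly bounded, which you supply; the paper's version needs the same boundedness of $b^{\twls}$ to multiply against the $O_p(1/\sqrt{n})$ matrix difference but leaves that step implicit.
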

\begin{proof}
    We have
\begin{align*}
     \widehat{b}^{\tl(\twls)}- b^{\twls}:= &\left(\xx' \m \bpi \xx \right)^{-1}\xx' \m \R (y-\xx b^{\twls})
    \\ =& \left(\xx' \m \bpi \xx \right)^{-1} \left( \left(\xx' \m\R y-\xx' \m \bpi y\right) -\left( \xx' \m \R \xx -\xx' \m \bpi \xx \right) b^{\twls} \right)
        \\ =& \left(\frac{1}{n} \xx' \m \bpi \xx \right)^{-1} \left( \left(\frac{1}{n}  \xx' \m\R y-\frac{1}{n}\xx' \m \bpi y\right) -
        \left( \frac{1}{n}\xx' \m \R \xx - \frac{1}{n} \xx' \m \bpi \xx \right) b^{\twls} \right) 
        \\ =& O_p(1) \left( O_p(1/ \sqrt{n})+O_p(1/ \sqrt{n}) \right),
        \\=& O_p(1/ \sqrt{n})
\end{align*}
where the second to last line uses Lemma \ref{lemma.WLS.consistency} and Condition \ref{condition.denom.stable}.
\end{proof}

\begin{theorem}[Asymptotic Equivalence of WLS and its Taylor Approximation]
By Lemma \ref{lemma.WLS.consistency} and Condition \ref{condition.denom.stable}, WLS is asymptotically equivalent to the Taylor linear approximation for WLS.
\end{theorem}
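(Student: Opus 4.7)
The plan is to rewrite both the WLS estimator $\widehat{b}^{\twls}=(\xx'\m\R\xx)^{-1}\xx'\m\R y$ and its Taylor approximate $\widehat{b}^{\tl(\twls)}$ as the non-random vector $b^{\twls}$ plus a correction that shares the same ``numerator'' $\xx'\m\R(y-\xx b^{\twls})$, differing only in whether the ``denominator'' is the random $\xx'\m\R\xx$ or the non-random $\xx'\m\bpi\xx$. Subtracting then reduces the claim to bounding a product of four matrix factors whose orders in probability are all supplied by the preceding lemma and condition.

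First I would note the identity
\begin{align*}
\widehat{b}^{\twls}=b^{\twls}+(\xx'\m\R\xx)^{-1}\xx'\m\R(y-\xx b^{\twls}),
\end{align*}
obtained by adding and subtracting $(\xx'\m\R\xx)^{-1}\xx'\m\R\xx\,b^{\twls}=b^{\twls}$. Comparing with $\widehat{b}^{\tl(\twls)}=b^{\twls}+(\xx'\m\bpi\xx)^{-1}\xx'\m\R(y-\xx b^{\twls})$ and using the matrix identity $A^{-1}-B^{-1}=A^{-1}(B-A)B^{-1}$ with $A=\xx'\m\R\xx$ and $B=\xx'\m\bpi\xx$, I would distribute factors of $1/n$ so that every piece converges:
\begin{align*}
\widehat{b}^{\twls}-\widehat{b}^{\tl(\twls)}=\bigl(\tfrac{1}{n}A\bigr)^{-1}\bigl[\tfrac{1}{n}(B-A)\bigr]\bigl(\tfrac{1}{n}B\bigr)^{-1}\,\tfrac{1}{n}\xx'\m\R(y-\xx b^{\twls}).
\end{align*}

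By Lemma \ref{lemma.WLS.consistency} and Condition \ref{condition.denom.stable}, $(\tfrac{1}{n}A)^{-1}=O_p(1)$, $(\tfrac{1}{n}B)^{-1}=O(1)$, and $\tfrac{1}{n}(B-A)=O_p(1/\sqrt{n})$. The last factor $\tfrac{1}{n}\xx'\m\R(y-\xx b^{\twls})$ is a Horvitz-Thompson-type sum whose estimand is $\tfrac{1}{n}(\xx'\m\bpi y-\xx'\m\bpi\xx\,b^{\twls})=0$ by the defining equation for $b^{\twls}$; re-running the HT representation from the proof of Lemma \ref{lemma.WLS.consistency} with the bounded ``outcome'' $y-\xx b^{\twls}$ in place of $y$ shows this factor is $O_p(1/\sqrt{n})$ as well. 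Multiplying orders gives $\widehat{b}^{\twls}-\widehat{b}^{\tl(\twls)}=O_p(1/n)$, whence $\widehat{\delta}^{(\twls)}_c-\widehat{\delta}^{\tl(\twls)}_c=c'\bigl(\widehat{b}^{\twls}-\widehat{b}^{\tl(\twls)}\bigr)=O_p(1/n)=o_p(1/\sqrt{n})$ by Condition \ref{condition.bound.c}, giving the asymptotic equivalence.

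The main obstacle I expect is verifying the $O_p(1/\sqrt{n})$ rate of the last factor: unlike the numerator treated in Lemma \ref{lemma.WLS.consistency} its estimand is exactly zero, so one must be careful that the ``$y-\xx b^{\twls}$'' vector is bounded uniformly in $n$ and that $b^{\twls}=O(1)$. The former follows from Conditions \ref{condition.bound.y} and \ref{condition.bound.x}; the latter from Condition \ref{condition.denom.stable} together with the fact that $\tfrac{1}{n}\xx'\m\bpi y$ has a finite limit (itself a consequence of Conditions \ref{condition.bound.y}--\ref{condition.bound.m}). With these side-checks in place the remainder is just bookkeeping of orders in probability.
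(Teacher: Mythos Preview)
Your proposal is correct and follows essentially the same route as the paper: both start from the identity $\widehat{b}^{\twls}=b^{\twls}+(\xx'\m\R\xx)^{-1}\xx'\m\R(y-\xx b^{\twls})$, subtract the Taylor approximate, and show the remainder is $O_p(1/n)$ by multiplying two $O_p(1/\sqrt{n})$ factors against $O_p(1)$ factors. The only cosmetic differences are (i) you expand the difference of inverses via the resolvent identity $A^{-1}-B^{-1}=A^{-1}(B-A)B^{-1}$, while the paper leaves it as a single $O_p(1/\sqrt{n})$ block; and (ii) for the ``last factor'' $\tfrac{1}{n}\xx'\m\R(y-\xx b^{\twls})$ you argue directly that it is an HT-type sum with estimand zero, whereas the paper rewrites it, via the WLS normal equations $\xx'\m\R(y-\xx\widehat{b}^{\twls})=0$, as $(\tfrac{1}{n}\xx'\m\R\xx)(\widehat{b}^{\twls}-b^{\twls})$ and then cites root-$n$ consistency of $\widehat{b}^{\twls}$. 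These two treatments are algebraically equivalent, and your explicit side-check that $b^{\twls}=O(1)$ (needed so that $y-\xx b^{\twls}$ is uniformly bounded) is exactly the ingredient the paper's version also relies on implicitly.
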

\begin{proof}
    Let the ``true" WLS coefficient be $b^\twls = \left(\xx' \m \bpi \xx \right)^{-1}\xx' \m \bpi y$, then write the WLS estimator as,
    \begin{align*}
        c' \W^\twls \R y = & c' \left( \xx' \m \R \xx \right)^{-1} \xx' \m \R y
        \\ = & c'  b^{\twls} + c' \left( \xx' \m \R \xx \right)^{-1} \xx' \m \R \left( y - \xx b^{\twls}\right)
        \\ = & c'  b^{\twls} + c' \left( \xx' \m \bpi \xx \right)^{-1} \xx' \m \R \left( y - \xx b^{\twls}\right) 
        \\ & \hspace{9mm}+ c' \left( \left( \xx' \m \R \xx \right)^{-1} -\left( \xx' \m \bpi \xx \right)^{-1} \right)\xx' \m \R \left( y - \xx b^{\twls}\right)
        \\ = & c'  b^{\twls} + c' \left( \xx' \m \bpi \xx \right)^{-1} \xx' \m \R \left( y - \xx b^{\twls}\right) 
        \\ & \hspace{9mm}+ c' \left( \left( \xx' \m \R \xx \right)^{-1} -\left( \xx' \m \bpi \xx \right)^{-1} \right)\xx' \m \R \left( \xx \widehat{b}^{\twls} - \xx b^{\twls}\right)
        \\ & \hspace{9mm}+ c' \left( \left( \xx' \m \R \xx \right)^{-1} -\left( \xx' \m \bpi \xx \right)^{-1} \right)\xx' \m \R \left( y- \xx \widehat{b}^{\twls} \right)
                \\ = & \widehat{\delta}_c^{\tl(\twls)} + c' \left( \left( \frac{1}{n}\xx' \m \R \xx \right)^{-1} -\left( \frac{1}{n} \xx' \m \bpi \xx \right)^{-1} \right)\left(\frac{1}{n}\xx' \m \R \xx \right) \left( \widehat{b}^{\twls} - b^{\twls}\right)
        \\ = & \widehat{\delta}_c^{\tl(\twls)} + O_p(1/\sqrt{n})O_p(1)O_p(1/\sqrt{n})
        \\ = & \widehat{\delta}_c^{\tl(\twls)} + O_p(1/n)
     \end{align*}
\end{proof}

\subsection{Conditions for consistent variance estimation}
 
\begin{definition}[Second-order design matrix]
  The ``second-order design matrix" is a forth order tensor $(kn \times kn \times kn \times kn)$ of variances and covariances of inverse-probability weighted pairwise-joint inclusion indicators, written,
  \begin{align*}
      \dubd :=\Big(  \E\left[\left(\R 1_{\scriptscriptstyle n}1_{\scriptscriptstyle kn}'\R\right) \otimes \left( \R 1_{\scriptscriptstyle kn}1_{\scriptscriptstyle kn}'\R\right) \right]  -\matp \otimes \matp \Big) / \left( \matp \otimes \matp \right),
  \end{align*}
  where $\matp:=\E\left[\R 1_{\scriptscriptstyle n}1_{\scriptscriptstyle kn}'\R\right]$ is a matrix of with inclusion probabilities on the diagonal and pair-wise joint inclusion probabilities off the diagonal, ``$\otimes$" is the tensor outer product and ``/" is elementwise division with division by zero resolving to zero.
\end{definition}

\begin{condition}[Second order design constraint for consistent variance estimation]\label{condition.bound.dd}
There exists a finite constant $u_{\dubd}$ such that $\frac{1}{n} \left|\left|\left(\dtilde \otimes \dtilde \right) \circ \dubd \right|\right|_{1,1,1,1} < u_{\dubd}$ for all $n$,
where ``$\otimes$" is tensor outer product, ``$\circ$" is elementwise multiplication, $\left|\left|. \right|\right|_{1,1,1,1}$ gives the sum of the absolute values of the tensor entries.
\end{condition}

\begin{theorem}[Consistency of the Horvitz-Thompson variance estimator]
By Conditions \ref{condition.bound.c}-\ref{condition.bound.dmat} and \ref{condition.bound.dd} the variance estimator for the Horvitz-Thompson point estimator is consistent.
\end{theorem}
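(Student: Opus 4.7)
The plan is to show consistency in two stages: first observe that $\widehat{\tilde{\V}}(\widehat{\delta}^\tht)$ is exactly unbiased for the variance bound $z_c^{\tht \prime} \dtilde z_c^\tht$ by construction, and then bound the sampling variance of $\widehat{\tilde{\V}}(\widehat{\delta}^\tht)$ using the second-order design tensor $\dubd$. Once the variance is shown to decay faster than the variance bound itself, Chebyshev's inequality delivers consistency.

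First I would write out the estimator elementwise as
\begin{align*}
\widehat{\tilde{\V}}(\widehat{\delta}^\tht) = \sum_{i,j} z_i z_j \, \dtildep_{ij} R_{ii} R_{jj},
\end{align*}
where $z = z_c^\tht$. Since $\E[R_{ii}R_{jj}] = \matp_{ij}$ and $\dtildep_{ij} = \dtilde_{ij}/\matp_{ij}$ by construction (with $0/0 = 0$), unbiasedness $\E[\widehat{\tilde{\V}}] = z'\dtilde z$ is immediate, so consistency reduces to controlling $\V(\widehat{\tilde{\V}})$. Expanding the variance of this quadratic form in $\R$ and using the definition of $\dubd$, namely $\cov(R_{ii}R_{jj}, R_{kk}R_{ll}) = \dubd_{ijkl} \matp_{ij}\matp_{kl}$, the $\matp$ factors cancel against $\dtildep$ and I obtain the clean contraction
\begin{align*}
\V(\widehat{\tilde{\V}}(\widehat{\delta}^\tht)) = \sum_{i,j,k,l} z_i z_j z_k z_l \, \dtilde_{ij}\dtilde_{kl}\, \dubd_{ijkl}.
\end{align*}

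Next, following the same Holder-style bounding argument used in the proof of Theorem \ref{theorem.HT.consistency}, I would bound this contraction by
\begin{align*}
\V(\widehat{\tilde{\V}}(\widehat{\delta}^\tht)) \leq \textnormal{max}(|z|)^4 \cdot \left|\left|(\dtilde \otimes \dtilde) \circ \dubd \right|\right|_{1,1,1,1}.
\end{align*}
Because $z_c^\tht = \diag{y}\onesmat(\onesmat'\onesmat)^{-1}c$ has entries of the form $y_i c_r / n$, Conditions \ref{condition.bound.c}-\ref{condition.bound.y} give $\textnormal{max}(|z|) \leq u_c u_y / n$, and Condition \ref{condition.bound.dd} gives $||(\dtilde \otimes \dtilde)\circ \dubd||_{1,1,1,1} \leq u_\dubd n$. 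Combining these, $\V(\widehat{\tilde{\V}}) \leq u_c^4 u_y^4 u_\dubd / n^3$. Meanwhile, the estimand $z'\dtilde z$ is of order $1/n$ under Condition \ref{condition.bound.dmat} and a corresponding bound on $\dtilde$, so by Chebyshev's inequality $n(\widehat{\tilde{\V}} - z'\dtilde z) \arrowp 0$, establishing consistency.

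The main obstacle is the clean bookkeeping in the second step: aligning the fourth-order tensor indexing of $\dubd$ with the pair-of-pairs structure $(R_{ii}R_{jj})(R_{kk}R_{ll})$, and verifying that the $\matp$'s inside $\dtildep$ really do cancel against the $\matp\otimes\matp$ factor in the definition of $\dubd$ (including on the ``division-by-zero-is-zero'' set, where both sides contribute nothing). Once that tensor identity is in place, the Holder bound and the appeal to Condition \ref{condition.bound.dd} are routine analogues of the first-order argument already used for Horvitz-Thompson point consistency.
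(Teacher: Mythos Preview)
Your proposal is correct and follows essentially the same route as the paper: establish unbiasedness of $\widehat{\tilde{\V}}(\widehat{\delta}^\tht)$, express its sampling variance as the contraction of $z\otimes z\otimes z\otimes z$ against $(\dtilde\otimes\dtilde)\circ\dubd$, and then apply the Holder-type bound with Conditions \ref{condition.bound.c}, \ref{condition.bound.y}, and \ref{condition.bound.dd} to obtain $n\V\big(n\,\widehat{\tilde{\V}}\big)\le u_c^4 u_y^4 u_{\dubd}$. Your version is simply more explicit about the tensor bookkeeping and the final Chebyshev step, both of which the paper leaves implicit.
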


\begin{proof}
The variance of $n$ times the Horvitz-Thompson variance estimator (times $n$) is,
\begin{align*}
n \V \bigg ( n \widehat{\tilde{\V}}\left(\widehat{\delta}^{\tht} \right) \bigg ) = & n\E \left[ \left(n  {z^\tht}' \R \dtildep \R z^\tht - n {z^\tht}' \tilde{\dmat} z^\tht \right)^2\right]
\\ \leq  & \max\left(\left|c\right|\right)^4 \max\left(\left|y\right|\right)^4 \frac{1}{n} \left| \left|\left(\dtilde \otimes \dtilde \right) \circ \dubd \right|\right|_{1,1,1,1}
\\ \leq  & u_c^4 u_y^4 u_{\dubd},
\end{align*}
where the second line uses Holder's inequality and the last line uses Conditions \ref{condition.bound.c}, \ref{condition.bound.y} and \ref{condition.bound.dd}.
\end{proof}

\exampleType{Checking consistency of HT variance (bound) estimator for completely randomized experiments} Again consider a completely randomized experiment with $n$ units where a fixed number of units, $n_c$, are randomly assigned to control, and the remainder, $n_t= n - n_c$, are assigned to treatment. Assume an asymptotic sequence of designs is such that there exists a constant value, $\pi$, such that $\frac{n_t}{n} \rightarrow \pi$ as $n \rightarrow \infty$ with $0<\pi<1$. Let the variance bound be the Neyman bound, i.e., $\dtilde =\dtilde^\tgn$ from Definition \ref{definition.neyman.bound}. Table \ref{table.analysis.dd.condition} enumerates the unique values that appear in $\frac{1}{n} \left|\left|\left(\dtilde^\tgn \otimes \dtilde^\tgn \right) \circ \dubd \right|\right|_{1,1,1,1}$ along with their relative frequencies and shows that Condition \ref{condition.bound.dd} is satisfied for completely randomized experiments. Hence, the Horvitz-Thompson variance (bound) estimator given in Equation \ref{var_ht_est} is consistent for the Neyman bound. { \hfill $\triangle$}


\begin{landscape}
\centering

\begin{table}
\centering 
\begin{tabular}{|c|c|c|c|c|c| c|} \hline 
 \makecell{Treatment (T) \\ or Control (C)} & $ijkl$ pattern & count &  $\makecell{ \hspace{-8mm}\{ \scriptstyle \E\left[\left(\R 1_{\scriptscriptstyle n}1_{\scriptscriptstyle kn}'\R\right) \right.  \\    \left. \scriptstyle \otimes   \left(\R 1_{\scriptscriptstyle n}1_{\scriptscriptstyle kn}'\R\right)\right] \}_{\scriptscriptstyle ijkl}}$ &  $\{ \matp \otimes \matp\}_{\scriptscriptstyle ijkl}$  & $\{\dtilde^\tgn \otimes \dtilde^\tgn \}_{\scriptscriptstyle ijkl}$ 
 & count $\times  \scriptstyle \frac{1}{n} \big \{ \left(\dtilde \otimes \dtilde \right) \circ \dubd \big \} _{\scriptscriptstyle ijkl}
$
\\ \hline $\scriptstyle i,j,k,l \in \textnormal{C}$  &  $\scriptstyle i=j=k=l$ & $\scriptstyle n$
            &  $\frac{n_c}{n}$ & $\frac{n_c^2}{n^2}$
             & $\frac{n^2}{n_c^2}$ & $\scriptstyle \frac{n^2 n_t }{n_c^3}=O(1)$
     \\ &   \makecell{$\scriptstyle i=j=k,l$ or $\scriptstyle i=j=l,k$ \vspace{-1.5mm} \\  or $\scriptstyle i,j=k=l$ or $\scriptstyle i=k=l,j$}  &  $\scriptstyle 4n(n-1)$ & $\frac{n_c (n_c-1)}{n(n-1)}$ & $\frac{n_c^2(n_c-1)}{n^2(n-1)}$
         & $-\frac{n^2}{n_c^2(n-1)}$ & $\scriptstyle -\frac{4n^2 n_t}{n_c^3}=O(1)$ 
     \\ &  $\scriptstyle i=j, k=l$  &  $\scriptstyle n(n-1)$ & $ \frac{n_c (n_c-1)}{n(n-1)}$ & $\frac{n_c^2 }{n^2 }$
        & $\frac{n^2}{n_c^2}$ & $\scriptstyle \frac{n_t n^2}{n_c^3}=O(1)$
     \\ &  $\scriptstyle i=j,k,l$ or $\scriptstyle i,j,k=l$  & $\scriptstyle 2n(n-1)(n-2)$ & $ \frac{n_c (n_c-1)(n_c-2)}{n(n-1)(n-2)}$ & $\frac{n_c^2(n_c-1) }{n^2(n-1) }$
        & $-\frac{n^2}{n_c^2(n-1)}$ & $\scriptstyle  \frac{4n_t n^2}{ n_c^3}=O(1)$
\\ &  $\scriptstyle i=k,j=l$ or $\scriptstyle i=l, j=k$ & $\scriptstyle 2n(n-1)$  &
          $ \frac{n_c (n_c-1)}{n(n-1)}$ & $\frac{n_c^2(n_c-1)^2 }{n^2(n-1)^2 }$
          & $-\frac{n^2}{n_c^2(n-1)^2}$ & $\scriptstyle -\frac{2 n^2}{n_c^2}=O(1)$
     \\ &   \makecell{$\scriptstyle i=k, j, l$ or $\scriptstyle i=l,j,k$ \vspace{-1.5mm} \\ or $\scriptstyle i,j=k,l$ or $\scriptstyle i,j=l,k$} & $\scriptstyle 4n (n-1)(n-2)$ 
        & $ \frac{n_c (n_c-1)(n_c-2)}{n(n-1)(n-2)}$ & $\frac{n_c^2(n_c-1)^2  }{n^2(n-1)^2  }$
        & $\frac{n^2}{n_c^2(n-1)^2}$ & $\scriptstyle \frac{4 n^2}{n_c^2}\left(\frac{n(n_c-2)}{n_c(n_c-1)}-\frac{(n-2)}{(n-1)} \right)=O(1)$
     \\ & $\scriptstyle i,j,k,l$ & $\scriptscriptstyle n(n-1)(n-2)(n-3)$ & 
        $ \scriptstyle \frac{n_c (n_c-1)(n_c-2)(n_c-3)}{n(n-1)(n-2)(n-3)}$ & $\scriptstyle \frac{n_c^2(n_c-1)^2  }{n^2(n-1)^2  }$
        & $\scriptstyle \frac{n^2}{n_c^2(n-1)^2}$ & \makecell{$\scriptstyle \hspace{-6mm} \frac{-4  n_t n^3}{n_c^2(n-1)(n_c-1)} -\frac{6n^2}{n_c^2(n-1)} $ \\ \hspace{15mm} $\scriptstyle   +\frac{6n^3}{n_c^3(n_c-1)}=O(1) $}
\\ \hline  $\scriptstyle i,j \in \textnormal{T}$ $\scriptstyle k,l\in \textnormal{C}$ &    $\scriptstyle i=j=k=l$ & $\scriptstyle 2n$ & $\scriptstyle 0$  & $\scriptstyle \frac{n_c n_t}{n^2}$ & $\scriptstyle \frac{n^2 }{n_t n_c}$ & $\scriptstyle -\frac{n^2}{n_t n_c}=O(1)$
     \\ or $\scriptstyle i,j \in \textnormal{C}$ $\scriptstyle k,l\in \textnormal{T}$ &  \makecell{$\scriptstyle i=j=k,l$ or $\scriptstyle i=j=l,k$ \vspace{-1.5mm} \\ or $\scriptstyle i,j=k=l$ or $\scriptstyle i=k=l,j$} &  $\scriptstyle 8n(n-1)$ & $\scriptstyle 0$    & $\scriptstyle \frac{n_c n_t (n_t-1)}{n^2(n-1)}$ & $\scriptstyle-\frac{n^2}{n_t n_c(n-1)}$ & $\scriptstyle -\frac{8n^2}{n_t n_c}=O(1)$
     \\ &  $\scriptstyle i=j,k=l$  & $\scriptstyle 2 n (n-1)$ &  $\frac{n_t n_c}{n^2}$  & $\frac{n_c n_t}{n^2}$ & $\frac{n^2 }{n_t n_c}$ & $\scriptstyle 0$
     \\ &   $\scriptstyle i=j,k,l$  &  $\scriptstyle 4 n(n-1)(n-2)$ & $\frac{n_t n_c(n_t-1)}{n^2(n-1)}$  & $\frac{n_c n_t (n_t-1)}{n^2(n-1)}$ & $ \frac{n^2}{n_t n_c(n-1)}$  & $\scriptstyle 0$
\\ &   $\scriptstyle i=k,j=l$ or $\scriptstyle i=l,j=k$  & $\scriptstyle 4n(n-1)$  & $\scriptstyle 0$ & $\frac{n_c n_t (n_t-1) (n_c-1)}{n^2(n-1)^2}$ & $\frac{n^2}{n_t n_c(n-1)^2}$ & $\scriptstyle  \frac{4 n^2}{n_t n_c(n-1)}=O(1/n)$ 
     \\ &  $\scriptstyle i=k=l,j$ or $\scriptstyle i,j=k=l$    & $\scriptstyle 4n(n-1)$ & $\scriptstyle 0$ & $\frac{n_c n_t (n_c-1)}{n^2(n-1)}$ & $-\frac{n^2}{n_t n_c(n-1)}$ & $\scriptstyle  -\frac{4 n^2}{n_t n_c}=O(1)$ 
     \\ &  \makecell{$\scriptstyle i=k, j, l$ or $\scriptstyle i=l,j,k$ \vspace{-1.5mm} \\ or $\scriptstyle i,j=k,l$ or $\scriptstyle i,j=l,k$} & $\scriptstyle 8n (n-1)(n-2)$  & $\scriptstyle 0$  & $\frac{n_c n_t (n_c-1) (n_t-1)}{n^2(n-1)^2}$ & $\frac{n^2}{n_t n_c(n-1)^2}$ & $\scriptstyle  \frac{8 n^2 (n-2)}{n_t n_c (n-1) }=O(1)$ 
     \\ &  $\scriptstyle i,j,k,l$  & $\scriptscriptstyle 2 n(n-1)(n-2)(n-3)$    
        & $\frac{n_t (n_t-1) n_c (n_c-1)}{n(n-1)(n-2)(n-3)}$  & $\frac{n_c n_t (n_c-1)(n_t-1)}{n^2(n-1)^2}$ & $\frac{n^2}{n_t n_c(n-1)^2}$ & $\scriptstyle  \frac{4 n^2 (n-3)}{n_t n_c (n-1) }=O(1)$ 
\\ \hline $\scriptstyle i,j,k,l \in \textnormal{T}$ &  $\scriptstyle i=j=k=l$ & $\scriptstyle n$
            &  $\frac{n_t }{n}$ & $\frac{n_t ^2}{n^2}$
             & $\frac{n^2}{n_t ^2}$ & $\scriptstyle \frac{n^2 n_c}{n_t ^3}=O(1)$
     \\ &   \makecell{$\scriptstyle i=j=k,l$ or $\scriptstyle i=j=l,k$ \vspace{-1.5mm} \\ or $\scriptstyle i,j=k=l$ or $\scriptstyle i=k=l,j$}  &  $\scriptstyle 4n(n-1)$ & $\frac{n_t  (n_t -1)}{n(n-1)}$ & $\frac{n_t ^2(n_t -1)}{n^2(n-1)}$
         & $-\frac{n^2}{n_t ^2(n-1)}$ & $\scriptstyle -\frac{4n^2 n_c}{n_t ^3}=O(1)$ 
     \\ &  $\scriptstyle i=j, k=l$  &  $\scriptstyle n(n-1)$ & $ \frac{n_t  (n_t -1)}{n(n-1)}$ & $\frac{n_t ^2 }{n^2 }$
        & $\frac{n^2}{n_t ^2}$ & $\scriptstyle \frac{n_c n^2}{n_t ^3}=O(1)$
     \\ &  $\scriptstyle i=j,k,l$ or $\scriptstyle i,j,k=l$  & $\scriptstyle 2n(n-1)(n-2)$ & $ \frac{n_t  (n_t -1)(n_t -2)}{n(n-1)(n-2)}$ & $\frac{n_t ^2(n_t -1) }{n^2(n-1) }$
        & $-\frac{n^2}{n_t ^2(n-1)}$ & $\scriptstyle  \frac{4n_c n^2}{ n_t ^3}=O(1)$
\\ &  $\scriptstyle i=k,j=l$ or $\scriptstyle i=l, j=k$ & $\scriptstyle 2n(n-1)$  &
          $ \frac{n_t  (n_t -1)}{n(n-1)}$ & $\frac{n_t ^2(n_t -1)^2 }{n^2(n-1)^2 }$
          & $-\frac{n^2}{n_t ^2(n-1)^2}$ & $\scriptstyle -\frac{2 n^2}{n_t ^2}=O(1)$
     \\ &   \makecell{$\scriptstyle i=k, j, l$ or $\scriptstyle i=l,j,k$ \vspace{-1.5mm} \\ or $\scriptstyle i,j=k,l$ or $\scriptstyle i,j=l,k$} & $\scriptstyle 4n (n-1)(n-2)$ 
        & $ \frac{n_t  (n_t -1)(n_t -2)}{n(n-1)(n-2)}$ & $\frac{n_t ^2(n_t -1)^2  }{n^2(n-1)^2  }$
        & $\frac{n^2}{n_t ^2(n-1)^2}$ & $\scriptstyle \frac{4 n^2}{n_t ^2}\left(\frac{n(n_t -2)}{n_t (n_t -1)}-\frac{(n-2)}{(n-1)} \right)=O(1)$
     \\ & $\scriptstyle i,j,k,l$ & $\scriptscriptstyle n(n-1)(n-2)(n-3)$ & 
        $ \scriptstyle \frac{n_t  (n_t -1)(n_t -2)(n_t -3)}{n(n-1)(n-2)(n-3)}$ & $\scriptstyle \frac{n_t ^2(n_t -1)^2  }{n^2(n-1)^2  }$
        & $\scriptstyle \frac{n^2}{n_t ^2(n-1)^2}$ & \makecell{$\scriptstyle \hspace{-6mm} \frac{-4  n_c n^3}{n_t^2(n-1)(n_t-1)} -\frac{6n^2}{n_t^2(n-1)} $ \\ \hspace{15mm} $\scriptstyle   +\frac{6n^3}{n_t^3(n_t-1)}=O(1) $}
        \\ \hline 
 \end{tabular}
 \caption{Analysis of Condition \ref{condition.bound.dd} for Complete Randomization}
\end{table}\label{table.analysis.dd.condition}

\end{landscape}

\newpage

\pagebreak
\begin{appendix}
	\section{Notation Index}
\hspace{-6mm}
	\begin{tabular}{| C{2cm} | L{12cm} | } \hline &
		\\ $n$ & Number of units in the finite population in the experiment
		\\ &	\\  	$1_{\scriptscriptstyle kn}$  & Length-$kn$ column vector of 1's. In matrix notation, serves as a replacement for the more common summation symbol, $\Sigma$
		\\ & \\$y_{0i}$, $y_{1i}$ & The control and treatment potential outcomes for the $i^{th}$ unit, respectively
		\\ & \\  $y_{0}$, $y_{1}$ & Length-$n$ vectors of control and treatment potential outcomes, respetively
		\\ & \\ $y$ & Length-$kn$ vector of all potential outcomes. The first $n$ elements are control potential outcomes multiplied by $-1$, followed by the treatment potential outcomes. Multiplication of control potential outcomes by $-1$ allows for the compact representation of the ATE as the sum of the elements of this vector divided by $n$
		\\ &	\\  	$\delta$ & Average treatment effect (ATE), the parameter of interest
		\\ & \\ $R_{0i}$, $R_{1i}$ & Random indicators of the $i^{th}$ unit's assignment to control and treatment, respectively
		\\ & \\ $R_{0}$, $R_{1}$ & Length-$n$ vectors of assignment indicators for control and treatment, respectively
		\\ & \\ $\R$ &$kn \times kn$ diagonal matrix of assignment indicators. The first $n$ diagonal elements represent the control indicators, followed by $n$ treatment indicators
		\\ & \\ $\pi_{0i}$, $\pi_{1i}$ & For the $i^{th}$ unit, the probability of assignment to control and treatment, respectively
		\\ & \\ $\pi_{0}$, $\pi_{1}$ & Length-$n$ vectors of probabilities of assignment to control and treatment, respectively
		\\ &  \\	$\bpi$ & $kn\times kn$ diagonal matrix of assignment probabilities. The first $n$ diagonal elements give the control probabilities, followed by the treatment probabilities
		\\ & \\  $\pi_{0i0j}$, $\pi_{0i1j}$,   & Joint assignment probabilities for units $i$ and $j$. For example, $\pi_{1i0j}$ is the probability that 
		\\ $\pi_{1i0j}$, $\pi_{1i1j}$    & $i$ is in treatment and $j$ is in control
		\\ & \\ $\dmat$ & $kn\times kn$ ``design" matrix that gives the variance-covariance matrix of the vector $1'_{\scriptscriptstyle kn}\bpi^{-1}\R$. Allows for compact representation of variance of HT estimators as a quadratic in matrix form
		\\ & \\ $\dmat_{00}$, $\dmat_{01}$, & The four $n\times n$ partitions of the matrix $\dmat$. For example, the top-right partition, $\dmat_{01}$, has
		\\ $\dmat_{10}$, $\dmat_{11}$ &  $i,j$ element $\frac{\pi_{0i1j}-\pi_{0i}\pi_{1j}}{\pi_{0i}\pi_{1j}}$
		\\ & \\ $\tilde{\dmat}$ & A modified version of $\dmat$ that allows for compact representation of a variance {\it bound} for HT estimators as a quadratic in matrix form. While the variance of the HT estimator is not identified, a variance bound may be
		\\ & \\ 
		\hline
	\end{tabular}

\pagebreak
\noindent \hspace{-6mm}
\begin{tabular}{| C{2cm}| L{12cm} | } \hline &		
		\\ & \\ $\matp$ & $kn\times kn$ ``probability" matrix that gives the joint assignment probabilities
		\\ & \\ $\matp_{00}$, $\matp_{01}$, & The four $n\times n$ quadrants of the matrix $\matp$. For example, $\matp_{01}$ has $ij$ element $\pi_{0i1j}$
		\\ $\matp_{10}$, $\matp_{11}$ & 		
		\\ & \\ $\tilde{\matp}$ & A modified version of $\matp$ that replaces zeros with ones. Allows for division by $\tilde{\matp}$ without division-by-zero error 
		\\ & \\ $x_i$ & Length-$k$ vector of covariates associated with the $i^{th}$ unit	
		\\ &  \\ $\x$ & An $n \times k$ matrix of covariates	
		\\ &  \\ $\tx$ & An $n \times (k+1)$ matrix representing the concatenation of an intercept vector, $1_n$, and $\x$
		\\ &  \\ $\xx$ & A $kn \times l$ matrix of covariates. The first $n$ rows are multiplied by $-1$ to mirror the vector $y$. Represents an arbitrary specification
		\\ &  \\ $\xx_{\sI}$ & A $kn \times (k+2) $ matrix of covariates. The ``common slopes" specification. Elements in the first $n$ rows are multiplied by -1 to mirror the vector $y$
		\\ &  \\ $\xx_{\II}$ & A $kn \times (2k+2)$ matrix of covariates. The ``separate slopes" specification. Elements in the first $n$ rows are multiplied by -1 to mirror the vector $y$
		\\ & \\ 
		\hline
	\end{tabular}
	\pagebreak
\section{Supplementary Proofs}
\printproofs
\end{appendix}

\end{document}